\newcommand{\real}{\ensuremath{\mathbb{R}}}
\newcommand{\nat}{\ensuremath{\mathbb{N}}}
\newcommand{\one}{\ensuremath{\mathbf{1}}}
\newcommand{\smat}[1]{\ensuremath{\left[\begin{smallmatrix}#1\end{smallmatrix}\right]}}
\newcommand{\bmat}[1]{\ensuremath{\begin{bmatrix}#1\end{bmatrix}}}
\newcommand{\sa}[1]{\mathsf{#1}}
\newcommand{\VVT}{\ensuremath{V_{\mathcal{V}_T}}}
\DeclareMathOperator{\rank}{rank}
\DeclareMathOperator{\ve}{vec}
\newtheorem{definition}{Definition}
\newtheorem{remark}{Remark}
\newtheorem{lemma}{Lemma}
\newtheorem{problem}{Problem}
\newtheorem{theorem}{Theorem}
\newtheorem{proposition}{Proposition}
\newtheorem{fact}{Fact}
\title{Controller design for robust invariance\\ from noisy data}
\author{Andrea Bisoffi, Claudio De Persis, Pietro Tesi
\thanks{This research is partially supported by a Marie Sk\l{}odowska-Curie COFUND grant, no.~754315.}
\thanks{A. Bisoffi, C. De Persis are with ENTEG and the J.C. Willems Center for Systems and Control, University of Groningen, 9747 AG Groningen, The Netherlands. P. Tesi is with DINFO, University of Florence, 50139 Florence, Italy. \texttt{\{a.bisoffi, c.de.persis\}@rug.nl, pietro.tesi@unifi.it}}
}
\begin{document}

\maketitle

\begin{abstract}
For an unknown linear system, starting from noisy open-loop input-state data collected during a finite-length experiment,  
we directly design a linear feedback controller that guarantees robust invariance of a given polyhedral set of the state in the presence of disturbances.
The main result is a necessary and sufficient condition for the existence of such a controller,
and amounts to the solution of a linear program. The benefits of large and rich data sets for the solution of the problem are discussed. A numerical example about a simplified platoon of two vehicles illustrates the method.
\end{abstract}

\section{Introduction}
\label{sec:intro}

In many control systems, it is vital to explicitly enforce state constraints. This corresponds to imposing invariance of a set, that is, solutions to the system initialized within the set never leave the set. Due to its practical motivation, invariance has been thoroughly investigated from the late 1980's, and \cite{blanchini1999set,blanchini2008set} are comprehensive guides to these developments. Among these, we focus in this work on the realistic setting when invariance needs to be guaranteed despite a disturbance.

The previous design methods rely on the knowledge of a model of the system, whose parameters need typically to be identified from data. When controlling increasingly complex systems, however, first-principles modeling and system identification may neglect relevant phenomena that are still captured by data.
In such cases, the literature thread of direct data-driven control (see, e.g., \cite{hjalmarsson1998iterative,campi2002virtual,karimi2004iterative}) is a valid approach to design a feedback controller directly from input-output data collected in an experiment, instead of identifying a model of the system through those data as an intermediate step.

With the same motivation of direct data-driven control, we propose here a data-based design of controllers that render a given set robustly invariant.
As a key challenging aspect, the disturbance with respect to which we want to achieve robust invariance, also affects the data, which are then noisy. For the proposed design, (i)~we use input-state data generated by a linear system affected by the (process) disturbance and, as in many model-based solutions,
we consider (ii)~the set of values taken by the disturbance as prior information; (iii)~polyhedral sets for state and disturbance constraints; (iv)~linear state feedback as control policy.

\textbf{Contribution.} Our main contributions are  necessary and sufficient conditions for robust invariance of a polyhedral set of the state that depend only on noisy input-state data, have a simple and explicit expression, and correspond to numerically efficient linear programs. These conditions provide a new data-based solution to a classical invariance problem. 
We also show how a condition akin to persistence of excitation is beneficial to solving the considered data-based robust invariance problem. Moreover, input constraints can be straightforwardly incorporated in our conditions (as indicated in Remark~\ref{remark:input constraints}).%

\textbf{Related work.}
Model-based methods for robust invariance are in \cite{blanchini1990feedback}, and, for the case without disturbance, in \cite{gutman1986admissible,bitsoris1988positive,vassilaki1988feedback,
gilbert1991linear,gravalou1994algorithm}. Some of them \cite{gutman1986admissible,gilbert1991linear} had then a pivotal role in the model predictive literature \cite{bemporad2002explicit}.
Our data-driven approach is inspired by the key insight in 
\cite[Thm.~1]{willems2005note}.
While early use of this insight in the context of data-driven control has appeared, e.g., in~\cite{markovsky2008datadriven,park2009stability}, recent contributions have emphasized its role in the constrained optimal control of unknown stochastic linear systems~\cite{coulson2019data,coulson2020distributionally} and in the derivation of explicit formulas for data-driven control with optimal and robust performance~\cite{depersis2020tac}. 
Since then, many very recent works in data-driven control have appeared, among which \cite{berberich2019robust,vanwaarde2020data,bisoffi2020ifac,monshizadeh2020amidst,vanwaarde2020noisy,
berberich2020robust,guo20cdc,bisoffi20scl}.
In the context of stabilization or $\mathcal{H}_2$, $\mathcal{H}_\infty$ control (instead of robust invariance), data-driven approaches with noisy data were considered in \cite{depersis2020tac,vanwaarde2020noisy,berberich2019robust}. 
Although the quadratic Lyapunov functions considered therein for stabilization may be relied upon to build robustly invariant sublevel sets, we consider here the different setting of polyhedral constraints. The consideration of state constraints relates this work to data-driven predictive control such as \cite{Salvador2018,coulson2019data,berberich2020robust}. 
Unlike these works, our focus is to provide, based on noisy data, necessary and sufficient conditions for when a given polyhedral set can be made robustly invariant.
 With a single offline linear program, we determine if robust invariance is feasible and, if so, yield a controller gain, as opposed to solving an online optimization problem as in predictive control.
We note that in our previous work \cite{bisoffi2020ifac} we considered primarily the nominal case (i.e., no disturbance in the dynamics or in the data) and gave a preliminary sufficient condition for noisy data.
Moreover, we mention \cite{aswani2013provably,garcia2015comprehensive,wabersich2018scalable} for data-driven control with state and input constraints (sometimes termed \emph{safe control}). Our approach considers an unknown linear dynamics with a disturbance as opposed to a known linear dynamics with a disturbance \cite{aswani2013provably} or a known linear dynamics with nonlinear term \cite{wabersich2018scalable}, which is handled by an ellipsoidal underapproximation of the original polyhedral set.
Finally, to solve our problem we use a data-dependent parametrization of the system and Farkas lemma, the latter inspired by analogous results for model-based robust set invariance \cite{vassilaki1988feedback,bitsoris1988positive,blanchini1990feedback}.
After we completed this work, we became aware that these tools have been first proposed in a series of works for the data-driven stabilization of switched systems \cite{Dai2019cdc,Dai2020ifac,Dai2020aut} and nonlinear systems \cite{dai2020semialgebraic}.
Since the two problems addressed here and in the papers above are different, the conditions that guarantee the solution to our problem are distinct from those in \cite{Dai2019cdc,Dai2020aut}.

\textbf{Structure.} Preliminaries on polyhedra, robust invariance, Farkas lemma are in Section~\ref{sec:prel}. In Section~\ref{sec:main}, we present problem formulation and data-based solution, for which we then examine the benefits of ``rich'' data and its connection with the model-based solution. The results are proved in Section~\ref{sec:proofs} and illustrated in Section~\ref{sec:example}.

\section{Notation and preliminaries}
\label{sec:prel}

For vectors $x_1 \in \real^{d_1}$, \dots, $x_m \in \real^{d_m}$, 
the notation $(x_1, \dots,  x_m)$ is equivalent to $[x_1^\top \dots x_m^\top]^\top$.
For a matrix $A=\bmat{a_1 & \dots & a_n}$ partitioned according to its columns, 
define $\ve(A):=\bmat{a_1^\top \dots a_n^\top}^\top$, which returns a vector stacking the columns of $A$ taken from the left to the right.
For matrices $A$ and $B$, $A \otimes B$ denotes their Kronecker product.
For matrices $A \in \real^{m \times n}$, $X \in \real^{n \times p}$, $B \in \real^{p \times q}$, $C \in \real^{m \times q}$, the matrix  equation $A X B = C$ is equivalent (see, e.g., \cite[Lemma~4.3.1]{horn1991topics}) to
\begin{equation}
\label{vec trick}
(B^\top \otimes A) \ve(X)= \ve(A X B)= \ve(C).
\end{equation}
Given a set $\mathcal{A}$ and a scalar $a \ge 0$, $a \mathcal{A}:= \{ a x \colon x \in \mathcal{A} \}$.
For a positive integer $n$, $\nat_n := \{1,\dots,n\}$. 
$\one$ denotes the vector of all ones of appropriate
dimension, i.e., $\one := (1, \dots, 1)$.
$I$ denotes an identity matrix of appropriate
dimension.
$0$ denotes a matrix of all zeros of appropriate dimension.
Given two $n\times m$ matrices $A$ and $B$, $A \ge 0$ indicates that each entry of $A$ is nonnegative,
and $A \ge B$ is equivalent to $A - B \ge 0$.
For a matrix $A \in \real^{m \times n}$ and a vector $b \in \real^m$, a polyhedron is a set of the form $\{ x \in \real^n \colon A x \le b \}$.
Based on the observation in \cite[p.~108]{blanchini2008set}, the next result holds.

\begin{lemma}
\label{lemma:bounded polyhedron}
For $\sa{A} \in \real^{\sa{m} \times \sa{n}}$, $\sa{b}_1 \in \real^\sa{m}$ and $\sa{b}_2 \in \real^\sa{m}$, consider a nonempty polyhedron $\mathcal{A}:=\{ \sa{x} \in \real^\sa{n} \colon \sa{b}_1 \le \sa{A} \sa{x} \le \sa{b}_2 \}$. $\mathcal{A}$ is bounded if and only if $\sa{A}$ has full column rank.
\end{lemma}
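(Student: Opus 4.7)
The plan is to prove the two implications separately, relying on the kernel characterization of column rank and the assumption that $\mathcal{A}$ is nonempty.

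For the direction ``full column rank $\Rightarrow$ bounded'', I would note that if $\sa{A}$ has full column rank, then $\sa{A}^\top \sa{A}$ is positive definite, so there exists a constant $c>0$ such that $\|\sa{A}\sa{x}\|\ge c\|\sa{x}\|$ for every $\sa{x}\in\real^\sa{n}$. On the other hand, every $\sa{x}\in\mathcal{A}$ satisfies the componentwise bounds $\sa{b}_1\le \sa{A}\sa{x}\le \sa{b}_2$, so $\|\sa{A}\sa{x}\|$ is uniformly bounded by a constant $M$ depending only on $\sa{b}_1$ and $\sa{b}_2$. Combining the two estimates gives $\|\sa{x}\|\le M/c$ for every $\sa{x}\in\mathcal{A}$, proving boundedness.

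For the converse ``bounded $\Rightarrow$ full column rank'', I would argue by contraposition. Suppose $\sa{A}$ does not have full column rank; then there exists a nonzero vector $\sa{v}\in\real^\sa{n}$ with $\sa{A}\sa{v}=0$. Since $\mathcal{A}$ is nonempty by assumption, pick any $\sa{x}_0\in\mathcal{A}$. For every $t\in\real$ we have $\sa{A}(\sa{x}_0+t\sa{v})=\sa{A}\sa{x}_0$, which still satisfies $\sa{b}_1\le \sa{A}\sa{x}_0\le \sa{b}_2$, so the whole line $\{\sa{x}_0+t\sa{v}:t\in\real\}$ lies in $\mathcal{A}$, contradicting boundedness.

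The argument is elementary; the only subtlety is that nonemptiness of $\mathcal{A}$ is essential in the second direction (otherwise one could vacuously have boundedness without column rank). The main ``obstacle'' is therefore bookkeeping rather than any deep step: one must make explicit use of the hypothesis that $\mathcal{A}$ is nonempty when producing the line witnessing unboundedness.
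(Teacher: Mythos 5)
Your proof is correct and follows essentially the same route as the paper: the unbounded-line argument via a kernel vector for the ``bounded $\Rightarrow$ full column rank'' direction is identical, and your lower bound $\|\sa{A}\sa{x}\|\ge c\|\sa{x}\|$ from positive definiteness of $\sa{A}^\top\sa{A}$ is just a cosmetic variant of the paper's use of a left inverse $\sa{A}_\textup{l}$ to recover $\sa{x}$ from the bounded vector $\sa{A}\sa{x}$. Your remark that nonemptiness is needed only in the second direction is accurate and consistent with the paper's argument.
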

\begin{proof}
($\Longrightarrow$)
Suppose by contradiction that $\sa{A}$ has not full column rank.
For $\sa{x} \in \mathcal{A}$ (i.e., $\sa{b}_1 \le \sa{A} \sa{x} \le \sa{b}_2$), also $\sa{x}+\sa{z} \in \mathcal{A}$ where $\sa{z}$ is a nonzero vector in the kernel of $\sa{A}$, since $\sa{A} (\sa{x} + \sa{z} ) = \sa{A} \sa{x}$. Since the norm of $\sa{z}$ can be arbitrarily large, $\sa{x} + \sa{z} \in \mathcal{A}$ contradicts boundedness of $\mathcal{A}$.\newline
($\Longleftarrow$)
For each $\sa{x} \in \mathcal{A}$, let $\sa{y} := \sa{A} \sa{x}$. Since $\sa{A}$ has full column rank, there exists a left inverse $\sa{A}_\textup{l}$ such that $\sa{A}_\textup{l} \sa{A} = I$. Hence,
$
\sa{A}_\textup{l} \sa{y} = \sa{A}_\textup{l} (\sa{A} \sa{x}) = \sa{x}
$.
From this equation, boundedness of $\sa{y}$ (since $\sa{b}_1 \le \sa{y} \le \sa{b}_2$) implies boundedness of $\sa{x}$.
\end{proof}

We consider next the notion of robust invariance.
\begin{definition}\emph{\cite[Def.~2.1]{blanchini1990feedback}}
\label{def:rob inv}
A set $\mathcal{A}$ is \emph{robustly invariant with respect to the set $\mathcal{B}$ for}
\begin{equation}
\label{eq:sysCLdist}
x^+ = F x + d
\end{equation}
if for each initial condition $x(0) \in \mathcal{A}$ and each disturbance $d$ satisfying $d(t) \in \mathcal{B}$ for all $t \ge 0$, the corresponding solution to~\eqref{eq:sysCLdist} satisfies $x(t) \in \mathcal{A}$ for all $t \ge 0$.
\end{definition}

The next fact is a consequence of Farkas lemma credited to R. J. Duffin in \cite[p.~32]{mangasarian1994nonlinear}, and is also the subject of~\cite{hennet1989extension} due to its instrumental role in the invariance literature. Its proof is in the appendix for completeness.
\begin{fact}
\label{fact:farkas}
Let $\sa{A} \in \real^{\sa{p} \times \sa{n}}$, $\sa{B} \in \real^{\sa{q} \times \sa{n}}$, $\sa{c} \in \real^{\sa{p}}$, $\sa{d} \in \real^{\sa{q}}$ and assume there exists $\sa{z} \in \real^{\sa{n}}$ satisfying $\sa{A} \sa{z} \le \sa{c}$. Then, \eqref{farkas-1} and \eqref{farkas-2} are equivalent:
\begin{align}
\label{farkas-1} 
& \sa{B} \sa{x} \le \sa{d} \text{ for all } \sa{x} \in \real^{ \sa{n}} \text{ such that } \sa{A} \sa{x} \le \sa{c};\\
\label{farkas-2}
& \text{there exists } \sa{E} \in \real^{\sa{q} \times \sa{p} }\! \text{ such that } \sa{E} \ge 0, \sa{B} = \sa{E} \sa{A},  \sa{E} \sa{c}\le \sa{d}.
\end{align}
\end{fact}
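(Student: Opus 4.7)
The plan is to prove the two implications separately. The direction from \eqref{farkas-2} to \eqref{farkas-1} is straightforward and does not use the assumption on $\sa{z}$: for any $\sa{x}$ with $\sa{A}\sa{x} \le \sa{c}$, nonnegativity of $\sa{E}$ preserves the inequality when $\sa{E}$ is applied on the left, so $\sa{B}\sa{x} = \sa{E}\sa{A}\sa{x} \le \sa{E}\sa{c} \le \sa{d}$, which is precisely \eqref{farkas-1}.

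The reverse direction \eqref{farkas-1}$\Rightarrow$\eqref{farkas-2} is the substantive part, and I would obtain it by applying the classical inhomogeneous Farkas lemma one row at a time. Fix $i \in \{1,\dots,\sa{q}\}$ and let $\sa{b}_i^\top$ denote the $i$-th row of $\sa{B}$ and $\sa{d}_i$ the $i$-th component of $\sa{d}$. Statement \eqref{farkas-1} says that $\sa{b}_i^\top \sa{x} \le \sa{d}_i$ holds for every $\sa{x}$ satisfying $\sa{A}\sa{x} \le \sa{c}$; equivalently, the linear program of maximizing $\sa{b}_i^\top \sa{x}$ over $\{\sa{x} : \sa{A}\sa{x} \le \sa{c}\}$ is bounded above by $\sa{d}_i$, and by the hypothesis $\sa{A}\sa{z} \le \sa{c}$ its feasible region is nonempty. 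LP duality (in its affine Farkas form) then yields a nonnegative row vector $\sa{e}_i^\top \ge 0$ such that $\sa{e}_i^\top \sa{A} = \sa{b}_i^\top$ and $\sa{e}_i^\top \sa{c} \le \sa{d}_i$. Assembling the rows $\sa{e}_i^\top$ into a matrix $\sa{E} \in \real^{\sa{q} \times \sa{p}}$ gives $\sa{E} \ge 0$, $\sa{E}\sa{A} = \sa{B}$ and $\sa{E}\sa{c} \le \sa{d}$, which is exactly \eqref{farkas-2}.

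The only delicate point, and the one I expect to be the main obstacle, is to invoke the correct variant of Farkas lemma in the second step: what is needed is the \emph{affine} or inhomogeneous version that certifies an upper bound on a linear functional over a polyhedron, rather than the more standard homogeneous version characterizing consistency of $\sa{A}\sa{x} \le 0$. This is also where the feasibility hypothesis on $\sa{z}$ becomes essential: without it, the primal LP could be infeasible, the implication \eqref{farkas-1} would be vacuously true, and no dual certificate $\sa{e}_i$ need exist. In the written-up proof, one can either quote the statement directly from a reference such as \cite{mangasarian1994nonlinear}, or reduce the affine case to the conic/homogeneous one by the standard lifting trick that appends an extra coordinate forcing the constant term.
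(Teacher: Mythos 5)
Your proposal is correct and follows essentially the same route as the paper: a row-by-row application of the inhomogeneous (affine) Farkas/duality result from \cite{mangasarian1994nonlinear}, with the feasibility hypothesis $\sa{A}\sa{z}\le\sa{c}$ used to guarantee the dual certificate $\sa{e}_\sa{i}$, followed by stacking the rows $\sa{e}_\sa{i}^\top$ into $\sa{E}$. The only cosmetic difference is that the paper invokes the theorem of the alternative and explicitly rules out the degenerate case $(\sa{e}_\sa{i}^\top\sa{A}=0,\ \sa{e}_\sa{i}^\top\sa{c}<0)$ via $\sa{z}$, whereas you absorb this into LP strong duality, and you spell out the easy direction directly rather than obtaining it from the quoted equivalence.
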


\section{Data-based guarantees for robust invariance from noisy data}
\label{sec:main}

Consider the discrete-time linear system
\begin{equation}
\label{sys}
x^+ = A x + B u + d
\end{equation}
where $x \in \mathbb{R}^n$ is the state, $u \in \mathbb{R}^m$ is the input, $d \in \mathbb{R}^n$ is the disturbance, and the system matrices have dimensions $A\in \mathbb{R}^{n\times n}$, $B\in \mathbb{R}^{n \times m}$.

For given matrices $\mathrm{S} \in \real^{n_s \times n}$ and $\mathrm{D} \in \real^{n_d \times n}$ and scalar parameter $\delta \ge 0$, we consider the polyhedral sets $\mathcal{S}$ and $\mathcal{D}_\delta$
\begin{align}
\mathcal{S} := & \{ x \in \real^n  \colon \mathrm{S} x \le \one \} \label{set S}\\
\mathcal{D}_\delta := & \{ d \in \real^n  \colon \mathrm{D} d \le \delta \one \}. \label{set D}
\end{align}
We make $\delta$ explicit as a subscript to account for disturbances with different magnitudes, where a smaller $\delta$ ($\delta=0$ in the limit) corresponds to a smaller set $\mathcal{D}_\delta$.

\begin{remark}[C-set]
The set $\mathcal{S}$ is convex, closed, and includes the origin as interior point; if we further assume that it is bounded, $\mathcal{S}$ is a C-set \cite[Def.~3.10]{blanchini2008set}.
\end{remark}

\subsection{The model-based problem formulation and solution}
\label{sec:model-based}

To motivate the data-based problem formulation in Problem~\ref{probl:data-based} below and the corresponding solution in Section~\ref{sec:data-based solution}, we first point out the problem and its solution in a model-based setting.

\begin{problem}
\label{probl:model-based}
For the given $\mathcal{S}$ and $\mathcal{D}_\delta$, design a linear state-feedback controller $u=K x$ for~\eqref{sys} so that 
\begin{equation}
\label{MB-robInv}
(A + BK) x +d \in \mathcal{S} \text{ for all } x \in \mathcal{S} \text{ and } d \in \mathcal{D}_\delta.
\end{equation}
\end{problem}
We note that the condition in~\eqref{MB-robInv} is equivalent to having $\mathcal{S}$ in~\eqref{set S} robustly invariant with respect to $\mathcal{D}_\delta$ in~\eqref{set D} for the closed-loop system $x^+ = (A + B K) x + d$ \cite[Proof of Thm.~2.1]{blanchini1990feedback}.
By \eqref{set S} and \eqref{set D}, \eqref{MB-robInv} is equivalent to
\begin{equation*}
\begin{split}
& \begin{bmatrix}
\mathrm{S}(A+BK) & \mathrm{S}
\end{bmatrix}
\begin{bmatrix}
x\\
d
\end{bmatrix}
\le \one \\
& \hspace*{5mm} \text{ for all }
\begin{bmatrix}
x\\
d
\end{bmatrix}
\text{ such that }
\begin{bmatrix}
\mathrm{S} & 0\\
0 & \mathrm{D}
\end{bmatrix}
\begin{bmatrix}
x\\
d
\end{bmatrix}
\le 
\bmat{
\one\\
\delta \one
}.
\end{split}
\end{equation*}
By Fact~\ref{fact:farkas} (whose assumption is verified by $(x,d)=0$), this is equivalent to
\begin{equation}
\label{MB-robInv-2}
\begin{split}
& \text{there exists } E\in \real^{n_s \times (n_s + n_d)} \text{ such that } \\
& \hspace*{3mm}  E \ge 0,\begin{bmatrix}
\mathrm{S}(A+BK) & \mathrm{S}
\end{bmatrix} = E \begin{bmatrix}
\mathrm{S} & 0\\
0 & \mathrm{D}
\end{bmatrix},  
E
\bmat{
\one\\
\delta \one
} \le \one,
\end{split}
\end{equation}
which is a linear program in the decision variables $E$ and $K$. Analogous results were obtained originally in, e.g., \cite{vassilaki1988feedback,bitsoris1988positive,blanchini1990feedback} within the invariance literature.

\subsection{The data-based problem formulation}

Our aim is solving a robust invariance problem only through a finite number of input and state data, instead of using the model knowledge embedded in matrices $A$ and $B$ as in Section~\ref{sec:model-based}.

To this end, in an open-loop experiment we apply $T$ input data $u(0), u(1), \dots, u(T-1)$, and measure the state data $x(0), x(1), \dots, x(T)$. We assume that this state response is generated according to~\eqref{sys}, so it contains also the effect of the disturbance samples $d(0), d(1), \dots, d(T-1)$. These disturbance samples are \emph{not} known to us, except for the fact that each of them belongs to the \emph{known} set $\mathcal{D}_\delta$.

We organize the input data, state data, and disturbance samples as
\begin{subequations}
\label{data}
\begin{align}
U_{0,T} & := 
\begin{bmatrix}
u(0) & u(1) & \ldots & u(T-1)
\end{bmatrix} \label{U0}\\
X_{0,T} & := 
\begin{bmatrix}
x(0) & x(1) & \ldots & x(T-1)
\end{bmatrix} \label{X0}\\
X_{1,T} & := 
\begin{bmatrix} 
x(1) & x(2) & \ldots & x(T)
\end{bmatrix} \label{X1}\\
D_{0,T} & :=
\begin{bmatrix}
d(0) & d(1) & \ldots & d(T-1)
\end{bmatrix}, \label{D0}
\end{align}
\end{subequations}
based on which we define
\begin{equation}
W_{0,T} := \bmat{X_{0,T}\\ U_{0,T} } \in \real^{(n+m)\times T}. \label{W0}
\end{equation}
The data in~\eqref{data}, generated according to~\eqref{sys}, satisfy
\begin{equation}
\label{data-relation}
X_{1,T}= A X_{0,T} + B U_{0,T} + D_{0,T}.
\end{equation}
We note that $X_{1,T}$ and $X_{0,T}$ are affected by $D_{0,T}$, so the data are \emph{noisy}.

Our objective is then to find a controller that guarantees robust invariance despite the imprecise characterization of the system obtained through noisy data. Specifically, noisy data allow us to characterize all \emph{consistent} matrices $(\hat A,\hat B)$, i.e., the matrices that are compatible with the measured data and the bound on the disturbance samples captured by the set $\mathcal{D}_\delta$.
More concretely, the consistent matrices $(\hat A,\hat B)$ must satisfy 
\begin{equation}
X_{1,T}=\hat A X_{0,T} + \hat B U_{0,T} + \hat D
\end{equation}
for some matrix $\hat D$, each of whose columns must belong to $\mathcal{D}_\delta$. These conditions are captured by the next set $\mathcal{V}_T$, where we make $T$ explicit as a subscript and $(M)_i$ denotes the $i$-th column of a matrix $M$:
\begin{equation}
\label{cal V}
\begin{split}
&\mathcal{V}_T:= \{\bmat{\hat A & \hat B} \colon X_{1,T} = \hat A X_{0,T} + \hat B U_{0,T} + \hat D,\\
& \hspace*{3.5cm} (\hat D)_i \in \mathcal{D}_\delta \text{ for all } i \in \nat_T\}.
\end{split}
\end{equation}
We use $W_{0,T}$ in~\eqref{W0} and write more compactly the set $\mathcal{V}_T$  as
\begin{equation}
\label{cal V-1}
\begin{split}
& \mathcal{V}_T= \{ V \in \real^{n \times (n+m)} \colon \\
& \hspace*{2cm}(X_{1,T} - V W_{0,T})_i \in \mathcal{D}_\delta \text{ for all } i \in \nat_T \}.
\end{split}
\end{equation}
Note that \eqref{cal V-1} is completely determined by the data $X_{0,T}$, $U_{0,T}$ (in $W_{0,T}$), $X_{1,T}$. Through $V \in \mathcal{V}_T$ (hence, through data $X_{0,T}$, $U_{0,T}$, $X_{1,T}$), the closed loop of any consistent matrices $(\hat A,\hat B)$ with feedback $u= Kx$ is
\begin{equation*}
\begin{split}
 x^+ & = \hat A x + \hat B (Kx) + d \\
& = \bmat{\hat A & \hat B}\bmat{I\\ K} x + d = V \bmat{I\\ K} x + d.
\end{split}
\end{equation*}
This discussion allows us to fully formulate the problem statement in our data-based setting.
\begin{problem}
\label{probl:data-based}
For the given $\mathcal{S}$ and $\mathcal{D}_\delta$, design a linear state-feedback controller $u=K x$ only through the input and state data $X_{0,T}$, $U_{0,T}$, $X_{1,T}$ so that
\begin{equation}
\label{rob-inv}
\begin{split}
V \bmat{I\\ K} x + d\in \mathcal{S} \text{ for all } x \in \mathcal{S}, d \in \mathcal{D}_\delta \text{ and } V \in \mathcal{V}_T.
\end{split}
\end{equation}
\end{problem}
Problem~\ref{probl:data-based} is the counterpart of Problem~\ref{probl:model-based} where \eqref{rob-inv} aims at designing $K$ robustly with respect to the set $\mathcal{V}_T$ induced by the lack of exact knowledge of the model due to the disturbance samples $d(0)$, \dots, $d(T-1)$.

\subsection{The data-based solution}
\label{sec:data-based solution}

We propose here the solution to Problem~\ref{probl:data-based}. 
When $\mathcal{S}$ in~\eqref{set S} is bounded, it can be written \cite[pp.~107-108]{blanchini2008set} as
\begin{equation}
\label{set S - vertices}
\begin{split}
\mathcal{S} = & \Bigg\{ \sum_{j=1}^{V_\mathcal{S}} \alpha_j x^j \colon \one^\top \alpha =1, \alpha \ge 0 \Bigg\}
\end{split}
\end{equation}
where $x^1, \dots, x^{V_\mathcal{S}}$ are the $V_\mathcal{S}$ vertices  of $\mathcal{S}$.
Our main result in Theorem~\ref{thm:main} gives a necessary and sufficient condition for the solution of Problem~\ref{probl:data-based} in terms of noisy data. 
The proof is in Section~\ref{sec:proof rob inv data-based}.
\begin{theorem}
\label{thm:main}
Let $\mathcal{S}$ in~\eqref{set S} be bounded. The next two statements are equivalent.
\begin{enumerate}[leftmargin=12pt]
\item \label{thm:main:statement 1} There exists $K \in \real^{m \times n}$ such that \eqref{rob-inv} holds. 
\item \label{thm:main:statement 2} There exist $K\in \real^{m \times n}$ and $E^j \in \real^{n_s \times (n_d + T n_d)}$ with $j \in \nat_{V_\mathcal{S}}$ such that
\begingroup
\setlength{\arraycolsep}{2.pt}
\begin{align}
\hspace*{-1mm}& E^j \ge 0,  \bmat{\mathrm{S} & \,\bigg(\bmat{I \\ K} x^j \bigg)^\top\! \otimes \mathrm{S} }
= E^j \bmat{
\mathrm{D} & 0 \\
0& - (W_{0,T}^\top \otimes \mathrm{D}) }, \notag \\
\hspace*{-1mm}& E^j \bmat{\delta \one\\ \delta \one - \mathrm{D} x(1)\\ \vdots \\ \delta \one - \mathrm{D} x(T)} \le \one.
\label{main:without x and dist BIS}
\end{align}%
\endgroup
\end{enumerate}%
\end{theorem}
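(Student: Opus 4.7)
The plan is to prove the equivalence in three steps: reduce \eqref{rob-inv} from a condition over $x \in \mathcal{S}$ to one over the finite set of vertices of $\mathcal{S}$, then vectorize the constraints on $V \in \mathcal{V}_T$, and finally invoke Fact~\ref{fact:farkas} once per vertex. Since $\mathcal{S}$ is bounded, \eqref{set S - vertices} applies and any $x \in \mathcal{S}$ is a convex combination $\sum_{j=1}^{V_\mathcal{S}} \alpha_j x^j$. For fixed $V$ and $d$, the map $x \mapsto \mathrm{S} V \bmat{I\\K} x + \mathrm{S} d$ is affine in $x$, so convexity of the half-space $\{y : y \le \one\}$ makes \eqref{rob-inv} equivalent to its restriction to the vertices: for each $j \in \nat_{V_\mathcal{S}}$ and for all $d \in \mathcal{D}_\delta$ and $V \in \mathcal{V}_T$, we need $\mathrm{S} V \bmat{I\\K} x^j + \mathrm{S} d \le \one$.

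Next, I turn this vertex condition into a linear implication in the stacked variable $(d, \ve(V))$. Applying \eqref{vec trick} with $A=\mathrm{S}$, $X=V$, $B=\bmat{I\\K}x^j$ yields $\mathrm{S} V \bmat{I\\K} x^j = \bigl((\bmat{I\\K}x^j)^\top \otimes \mathrm{S}\bigr)\ve(V)$. For the constraint $V \in \mathcal{V}_T$ in~\eqref{cal V-1}, column-wise membership $\mathrm{D}(X_{1,T} - V W_{0,T})_i \le \delta \one$ for $i \in \nat_T$ stacks, via \eqref{vec trick} applied to $\mathrm{D} V W_{0,T}$, into $-(W_{0,T}^\top \otimes \mathrm{D}) \ve(V) \le \delta \one - \ve(\mathrm{D} X_{1,T})$, where $\ve(\mathrm{D} X_{1,T})$ collects the blocks $\mathrm{D} x(1), \ldots, \mathrm{D} x(T)$. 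Combined with $\mathrm{D} d \le \delta \one$ for $d \in \mathcal{D}_\delta$, the vertex condition becomes
\begin{equation*}
\bmat{\mathrm{S} & (\bmat{I\\K}x^j)^\top \otimes \mathrm{S}} \bmat{d\\ \ve(V)} \le \one
\end{equation*}
for every $(d, \ve(V))$ with $\bmat{\mathrm{D} & 0 \\ 0 & -(W_{0,T}^\top \otimes \mathrm{D})} \bmat{d\\\ve(V)} \le \bmat{\delta \one \\ \delta \one - \ve(\mathrm{D} X_{1,T})}$.

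Finally, I apply Fact~\ref{fact:farkas} to each such implication. Its hypothesis holds with $\sa{z} = (0, \ve(\bmat{A & B}))$: indeed $\mathrm{D} \cdot 0 \le \delta \one$ since $\delta \ge 0$, and $\bmat{A & B} \in \mathcal{V}_T$ by~\eqref{data-relation} delivers the second block directly. The equivalence between \eqref{farkas-1} and \eqref{farkas-2} then yields, for each $j$, a matrix $E^j \in \real^{n_s \times (n_d + T n_d)}$, $E^j \ge 0$, realizing exactly the two displayed identities in~\eqref{main:without x and dist BIS}. Running the equivalence in either direction with the same $K$ (forward: build one $E^j$ per vertex from the given $K$; reverse: recover the quantifier over $(d, V)$ from $E^j \ge 0$) closes the proof.

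The main obstacle is the bookkeeping of the Kronecker algebra and of the block partition of $E^j$: its first $n_d$ columns must absorb the inequality on $d$, while its remaining $T n_d$ columns must absorb the one on $\ve(V)$. Once these data-based structures are set up correctly, the vertex reduction and the Farkas step themselves are routine.
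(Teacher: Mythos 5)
Your proposal is correct, and it uses the same two ingredients as the paper (the vec/Kronecker rewriting plus Fact~\ref{fact:farkas}, and the vertex representation \eqref{set S - vertices} of $\mathcal{S}$), but it assembles them in the opposite order. The paper first applies Fact~\ref{fact:farkas} pointwise in $x$: Lemma~\ref{lemma:rid of the dist-3} states that \eqref{rob-inv} holds iff for every $x \in \mathcal{S}$ there exists an $x$-dependent multiplier $E$ satisfying \eqref{sol3:rid of the dist}; only afterwards does it pass to the vertices (Lemma~\ref{lemma:vertices}), where the nontrivial direction requires constructing $E = \sum_j \alpha_j E^j$ from the vertex multipliers and checking that this convex combination of the $E^j$ still satisfies all three conditions — the paper even flags that the implicit $x$-dependence of $E$ is essential there. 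You instead reduce to the vertices first, observing that for fixed $d$ and $V$ the map $x \mapsto \mathrm{S}V\bmat{I\\K}x + \mathrm{S}d$ is affine, so the half-space constraint $\le \one$ propagates from the vertices to all convex combinations directly; then you invoke Fact~\ref{fact:farkas} only $V_\mathcal{S}$ times, with the same verification of its hypothesis via $d=0$ and $\bar V = \bmat{A & B}$ from \eqref{data-relation}. Your ordering sidesteps the convex-combination argument on the multipliers entirely and is slightly more economical; the paper's ordering yields the intermediate characterization of Lemma~\ref{lemma:rid of the dist-3}, which holds for every $x \in \mathcal{S}$ without invoking boundedness and is thus of some independent interest. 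Both routes land on exactly the linear program \eqref{main:without x and dist BIS}, with the block partition of $E^j$ (first $n_d$ columns against the constraint on $d$, remaining $Tn_d$ columns against the stacked constraint on $\ve(V)$) handled as you describe.
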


As a nice feature, \eqref{main:without x and dist BIS} corresponds to a linear program.
Moreover, it is straightforward to incorporate input constraints in this data-based linear program, as shown in the next remark.

\begin{remark}[Input constraints]
\label{remark:input constraints}
In addition to robust invariance, consider input constraints that are expressed, for a given matrices $\mathrm{U} \in \real^{n_u \times m}$, by the polyhedral set
\begin{equation}
\mathcal{U} := \{ u \in \real^m  \colon \mathrm{U} u \le \one \}. \label{set U}
\end{equation}
These input constraints, which ask $K x \in \mathcal{U}$ for all $x \in \mathcal{S}$, amount then to
\begin{equation*}
\mathrm{U} K x \le \one \text{ for all } \mathrm{S} x \le \one,
\end{equation*}
which is equivalent by Fact~\ref{fact:farkas} (since its assumption is verified) to the existence of $E_u \in \real^{n_u \times n_s}$ such that
\begin{equation}
E_u \ge 0, \mathrm{U} K = E_u \mathrm{S}, E_u \one \le \one.
\end{equation}
This condition is linear in the decision variables $E_u$ and $K$, and can be readily added to the linear program of Theorem~\ref{thm:main}.
\end{remark}

\subsection{Benefits of large and rich data sets}
\label{sec:benefits large and rich}

So far no assumptions were made on the data used to solve the linear program in~\eqref{main:without x and dist BIS}. However, data sets that are larger and carry more information are intuitively beneficial to the design of $K$ through~\eqref{main:without x and dist BIS}, as we show in the rest of the section.

A first straightforward observation regards large data sets.
Whenever we add an extra data point $u(T)$, $x(T+1)$ to the data $U_{0,T}$, $X_{0,T}$ and $X_{1,T}$, this leads from~\eqref{cal V-1} to the set
\begin{equation*}
\begin{split}
& \mathcal{V}_{T+1}= \{ V \in \real^{n \times (n+m)} \colon \\
& \hspace*{5mm} (X_{1,T} - V W_{0,T})_i \in \mathcal{D}_\delta \text{ for all } i \in \nat_T,\\
& \hspace*{24mm} x(T+1) - V \bmat{x(T)\\ u(T)} \in \mathcal{D}_\delta \} \subseteq \mathcal{V}_T,
\end{split}
\end{equation*}
since the additional data point simply corresponds to an additional constraint in $\mathcal{V}_{T+1}$ as compared to $\mathcal{V}_T$.
Hence, the set $\mathcal{V}_T$ remains the same or becomes smaller as the number $T$ of data increases, and this makes in turn easier to find a controller $K$ satisfying \eqref{rob-inv} in Problem~\ref{probl:data-based}.

A second observation regards ``richness'' of data. 
Proposition~\ref{prop:bounded cal V_T} below shows that for ``rich'' data, the set $\mathcal{V}_T$ of consistent dynamical matrices  is
guaranteed to be bounded. The notion of ``richness'' is associated here with the full row rank of the matrix $W_{0,T}$ in~\eqref{W0}.
Indeed, the full row rank of $W_{0,T}$ is related to considering persistently exciting inputs by, e.g., \cite[Cor.~2]{willems2005note}, which claims that for a controllable pair $(A,B)$ of $x^+ = A x + B u$, if the input sequence $u(0)$, \dots, $u(T-1)$ is persistently exciting of order $n+1$, then $W_{0,T}$ in~\eqref{W0} has full row rank. We now illustrate the relevance of the full-row-rank condition on $W_{0,T}$ even in the considered case with disturbance.
To do so, we replace, within the rest of \emph{this} section, the set $\mathcal{D}_\delta$ (for $\delta \ge 0$ as before) with
\begin{equation}
\mathcal{\hat D}_\delta := \{ d \in \real^n  \colon \delta d_\textup{l} \le \mathrm{\hat D} d \le \delta d_\textup{u} \} \label{set hatD}
\end{equation}
where $\mathrm{\hat D} \in \real^{\hat{n}_d \times n}$ and 
$d_\textup{l} < 0 < d_\textup{u}$, i.e., $0$ belongs to the interior of $\mathcal{\hat D}_\delta$ for $\delta>0$. We restrict ourselves to $\mathcal{\hat D}_\delta$ for technical convenience so that we can apply to it Lemma~\ref{lemma:bounded polyhedron} (whereas the analogue of Lemma~\ref{lemma:bounded polyhedron} for $\mathcal{D}_\delta$ in~\cite[p.~119, Ex. 11]{blanchini2008set} would generate more involved conditions).

\begin{remark}[$\mathcal{\hat D}_\delta$ and $\mathcal{D}_\delta$]
\eqref{set hatD} is rewritten immediately as
\begin{equation*}
\mathcal{\hat D}_\delta = \left\{ d \in \real^n  \colon \bmat{\mathrm{\hat D}\\ - \mathrm{\hat D}} d \le \bmat{\delta d_\textup{u}\\ - \delta d_\textup{l}} \right\} = \{ d \in \real^n  \colon \mathrm{\tilde D} d \le \delta \one \} 
\end{equation*}
for some matrix $\mathrm{\tilde D}$, since $d_\textup{u} > 0$ and  $- d_\textup{l} > 0$. 
This expression shows that $\mathcal{\hat D}_\delta$ is a special case of $\mathcal{D}_\delta$ in~\eqref{set D} for $\mathrm{D}= \mathrm{\tilde D}$.
\end{remark}

With the set $\mathcal{\hat D}_\delta$, we can characterize exactly when the set $\mathcal{V}_T$ is bounded in the next proposition, whose
proof is in Section~\ref{sec:proof prop bounded cal V_T}.

\begin{proposition}
\label{prop:bounded cal V_T}
$\mathcal{V}_T$ in~\eqref{cal V-1} (with $\mathcal{D}_\delta$ replaced by $\mathcal{\hat D}_\delta$) is a bounded polyhedron if and only if $W_{0,T}$ in~\eqref{W0} has full row rank and $\mathrm{\hat D}$ in~\eqref{set hatD} has full column rank.
\end{proposition}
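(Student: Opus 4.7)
The plan is to recast the membership condition defining $\mathcal{V}_T$ as a double-sided linear inequality in $\ve(V)$, and then apply Lemma~\ref{lemma:bounded polyhedron} to the resulting polyhedron, using the standard Kronecker rank identity $\rank(A\otimes B)=\rank(A)\rank(B)$ to split the full-column-rank condition into separate conditions on $W_{0,T}$ and on $\mathrm{\hat D}$.

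First I would rewrite the condition $(X_{1,T}-V W_{0,T})_i \in \mathcal{\hat D}_\delta$ for all $i\in \nat_T$ as the single matrix inequality
\begin{equation*}
\delta d_\textup{l} \one^\top \le \mathrm{\hat D}(X_{1,T}-V W_{0,T}) \le \delta d_\textup{u} \one^\top,
\end{equation*}
with $\one$ of size $T$. Vectorizing column by column and applying~\eqref{vec trick} to $\mathrm{\hat D} V W_{0,T}$ gives $\ve(\mathrm{\hat D} V W_{0,T}) = (W_{0,T}^\top \otimes \mathrm{\hat D})\ve(V)$, so that $\mathcal{V}_T$ is identified, via the linear bijection $V\mapsto \ve(V)$, with the polyhedron
\begin{equation*}
\{ y \in \real^{n(n+m)} : b_1 \le (W_{0,T}^\top \otimes \mathrm{\hat D}) y \le b_2\}
\end{equation*}
for suitable vectors $b_1,b_2$ depending on $\mathrm{\hat D} X_{1,T}$, $\delta$, $d_\textup{l}$, $d_\textup{u}$. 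Since vectorization is a linear bijection and preserves boundedness, boundedness of $\mathcal{V}_T$ is equivalent to boundedness of this polyhedron.

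Next I would invoke Lemma~\ref{lemma:bounded polyhedron}. Its nonemptiness hypothesis is met because the true $\bmat{A & B}$ lies in $\mathcal{V}_T$ by the data-generating equation~\eqref{data-relation} together with the assumption that each column of $D_{0,T}$ belongs to $\mathcal{\hat D}_\delta$ (the standing assumption in this subsection in place of $\mathcal{D}_\delta$). Lemma~\ref{lemma:bounded polyhedron} then equates boundedness of $\mathcal{V}_T$ with full column rank of the matrix $W_{0,T}^\top \otimes \mathrm{\hat D}$, which has $n(n+m)$ columns. Using $\rank(W_{0,T}^\top \otimes \mathrm{\hat D}) = \rank(W_{0,T})\,\rank(\mathrm{\hat D})$, full column rank is equivalent to $\rank(W_{0,T})=n+m$ (i.e.\ $W_{0,T}$ has full row rank) together with $\rank(\mathrm{\hat D})=n$ (i.e.\ $\mathrm{\hat D}$ has full column rank), which is exactly the claimed characterization.

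The only delicate points I foresee are bookkeeping-level: keeping the orientations and dimensions straight when vectorizing (since $V$ is a matrix rather than a single column, the inequalities become $\hat n_d\times T$ matrix inequalities before being stacked into $\real^{\hat n_d T}$), and explicitly justifying that boundedness transfers between $V$ and $\ve(V)$ and that $\mathcal{V}_T$ is nonempty so that Lemma~\ref{lemma:bounded polyhedron} applies. Both are routine once the vec reformulation is carefully written down.
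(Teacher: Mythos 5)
Your proposal is correct and follows essentially the same route as the paper: vectorize the membership constraints via~\eqref{vec trick} to view $\mathcal{V}_T$ as a double-sided polyhedron in $\ve(V)$ defined through $W_{0,T}^\top \otimes \mathrm{\hat D}$, confirm nonemptiness from the data-generating relation so Lemma~\ref{lemma:bounded polyhedron} applies, and split the full-column-rank condition with $\rank(W_{0,T}^\top \otimes \mathrm{\hat D})=\rank(W_{0,T})\rank(\mathrm{\hat D})$. The only cosmetic difference is the sign convention (the paper keeps $-(W_{0,T}^\top\otimes\mathrm{\hat D})\ve(V)$ while you absorb $\mathrm{\hat D}X_{1,T}$ into the bounds), which does not affect the rank or boundedness argument.
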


Proposition~\ref{prop:bounded cal V_T} establishes then that ``rich'' data, associated with full row rank of $W_{0,T}$, 
make easier to find $K$ satisfying \eqref{rob-inv} since, again, they shrink the uncertainty set for $V$.
Full row rank of $W_{0,T}$ can be checked from data.

\subsection{An alternative linear program formulation}

Finally, we propose a linear program formulation alternative to Theorem~\ref{thm:main}.

When $\mathcal{V}_T$ in~\eqref{cal V-1} is bounded, each element $V \in \mathcal{V}_T$ can be written as the convex hull of the (matrix) vertices $V^j$ of $\mathcal{V}_T$ with $j=1, \dots, \VVT$, as was done for $\mathcal{S}$ in~\eqref{set S - vertices}. 

This leads to the alternative linear program in the next theorem, whose proof is in Section~\ref{sec:proof thm alternative linear program}.

\begin{theorem}
\label{thm:alternative linear program}
Let $\mathcal{V}_T$ in~\eqref{cal V-1} be bounded. The next two statements are equivalent.
\begin{enumerate}[leftmargin=12pt]
\item There exists $K \in \real^{m \times n}$ such that \eqref{rob-inv} holds. 
\item There exist $K\in \real^{m \times n}$ and $E^j \in \real^{n_s \times (n_s + n_d)}$ with $j \in \nat_{\VVT}$ such that
\begingroup
\setlength{\arraycolsep}{3.5pt}
\begin{equation}
\label{alternative linear program}
E^j\! \ge 0, \bmat{\mathrm{S}V^j \bmat{I\\ K}\, & \mathrm{S}} \!= E^j\! \bmat{\mathrm{S} & 0 \\ 0 & \mathrm{D}}, E^j \! \bmat{\one \\ \delta \one} \le \one.
\end{equation}%
\endgroup
\end{enumerate}%
\end{theorem}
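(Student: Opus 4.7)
The plan is to reduce statement~1) to a condition indexed by the vertices of $\mathcal{V}_T$, and then apply, vertex by vertex, the Farkas--lemma derivation that was used in Section~\ref{sec:model-based} to pass from~\eqref{MB-robInv} to~\eqref{MB-robInv-2}.

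First I would prove that \eqref{rob-inv} is equivalent to
\begin{equation*}
V^j \bmat{I\\ K} x + d \in \mathcal{S} \text{ for all } x \in \mathcal{S},\ d \in \mathcal{D}_\delta,\ j \in \nat_{\VVT}.
\end{equation*}
The direction ``for every $V \in \mathcal{V}_T$ implies for every vertex $V^j$'' is immediate since each $V^j \in \mathcal{V}_T$. For the converse, boundedness of $\mathcal{V}_T$ supplies, in analogy with~\eqref{set S - vertices}, a vertex representation $V = \sum_{j=1}^{\VVT}\beta_j V^j$ with $\beta_j \ge 0$ and $\sum_j \beta_j = 1$. Using $\sum_j \beta_j = 1$ to absorb $d$ inside the combination gives
\begin{equation*}
V \bmat{I\\ K} x + d = \sum_{j=1}^{\VVT} \beta_j \bigl(V^j \bmat{I\\ K} x + d\bigr),
\end{equation*}
which lies in $\mathcal{S}$ by convexity of $\mathcal{S}$.

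Second, I would fix $j$ and apply to the condition ``$V^j \bmat{I\\ K} x + d \in \mathcal{S}$ for all $x \in \mathcal{S}$ and $d \in \mathcal{D}_\delta$'' the very same manipulation performed between~\eqref{MB-robInv} and~\eqref{MB-robInv-2}: write out the halfspace descriptions of $\mathcal{S}$ and $\mathcal{D}_\delta$ from~\eqref{set S} and~\eqref{set D}, and invoke Fact~\ref{fact:farkas} (whose premise is verified at $(x,d)=0$, exactly as in the derivation of~\eqref{MB-robInv-2}). This yields, for each $j$, the existence of $E^j \in \real^{n_s \times (n_s + n_d)}$ satisfying precisely the equality and inequality appearing in~\eqref{alternative linear program}, with $A+BK$ replaced by $V^j \bmat{I\\ K}$. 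The argument is reversible because Fact~\ref{fact:farkas} is an equivalence, which closes the proof.

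I expect the main obstacle to be the vertex reduction in the first step, since the converse direction genuinely relies on (i)~boundedness of $\mathcal{V}_T$ so that a vertex representation is available, (ii)~convexity of $\mathcal{S}$, and (iii)~the fact that the weights of a convex combination sum to one, which leaves the additive term $d$ unchanged. Once this reduction is in place, the second step is essentially a transcription of Section~\ref{sec:model-based} with $\bmat{A & B}$ replaced by the fixed vertex $V^j$, so no new technical ingredient is required beyond what was already used to derive~\eqref{MB-robInv-2}.
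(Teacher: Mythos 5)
Your proposal is correct and follows essentially the same route as the paper: boundedness of $\mathcal{V}_T$ is used to reduce \eqref{rob-inv} to the condition at the vertices $V^j$ (your convex-combination argument with $\sum_j\beta_j=1$ absorbing $d$ is exactly the justification the paper leaves implicit), and then Fact~\ref{fact:farkas} is applied vertex by vertex, mirroring the passage from \eqref{MB-robInv} to \eqref{MB-robInv-2}. No gap; if anything, you spell out the vertex-reduction step more explicitly than the paper does.
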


Some comments on Theorem~\ref{thm:alternative linear program} follow. 
First, the comparison of~\eqref{alternative linear program} with \eqref{MB-robInv-2} shows that, for a bounded $\mathcal{V}_T$, the solution of Theorem~\ref{thm:alternative linear program} is the natural data-based extension of the model-based solution. 
Second, whereas Theorem~\ref{thm:main} operates under boundedness of $\mathcal{S}$, Theorem~\ref{thm:alternative linear program} operates under boundedness of $\mathcal{V}_T$, which can be checked by Proposition~\ref{prop:bounded cal V_T}. 
Third, it is easier to find the vertex representation of $\mathcal{S}$ than that of $\mathcal{V}_T$ since $\mathcal{S}$ is characterized by fewer variables than $\mathcal{V}_T$ ($n$ instead of $n(n+m)$) and the halfspace representation of $\mathcal{S}$ typically involves significantly fewer constraints than that of $\mathcal{V}_T$ ($n_s$ instead of $n_d T$, where $n_s$ and $n_d$ are comparable), which may result in a larger number of vertices to compute for $\mathcal{V}_T$.
For this reason, while Theorem~\ref{thm:alternative linear program} gives a natural data-based
counterpart of the model-based solution, Theorem~\ref{thm:main} gives
a numerically more appealing, yet equivalent, data-based solution.

\section{Proofs}
\label{sec:proofs}

\subsection{Proof of Theorem~\ref{thm:main}}
\label{sec:proof rob inv data-based}

We first get rid of $d$ and $V$ in~\eqref{rob-inv}, and then of $x$. The first step is accomplished in the next lemma.
\begin{lemma} 
\label{lemma:rid of the dist-3}
\eqref{rob-inv} holds if and only if for all $x \in \mathcal{S}$, 
there exists $E \in \real^{n_s \times (n_d +T n_d)}$ such that
\begin{equation}
\label{sol3:rid of the dist}
\begin{split}
& E \ge 0, \\
& \bmat{\mathrm{S} & \bigg(\bmat{I \\ K} x \bigg)^\top \otimes \mathrm{S} }
= E \bmat{
\mathrm{D} & 0 \\
0& - (W_{0,T}^\top \otimes \mathrm{D}) },\\
& E \bmat{\delta \one\\ \delta \one - \mathrm{D} x(1)\\ \vdots \\ \delta \one - \mathrm{D} x(T)} \le \one.
\end{split}
\end{equation}
\end{lemma}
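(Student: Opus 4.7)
The plan is to view \eqref{rob-inv}, for a generic but fixed $x \in \mathcal{S}$, as a statement of the form ``for all $(d,V)$ satisfying a system of linear inequalities, another linear inequality holds,'' and then apply Fact~\ref{fact:farkas} to trade the universal quantifiers over $d$ and $V$ for an existential quantifier over a nonnegative multiplier $E$ (which will depend on $x$).

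First I would unpack the membership conditions. From \eqref{set S} and \eqref{set D}, the containment $V \bmat{I\\ K}x + d \in \mathcal{S}$ is $\mathrm{S} V \bmat{I\\K} x + \mathrm{S} d \le \one$; $d \in \mathcal{D}_\delta$ is $\mathrm{D} d \le \delta \one$; and, column by column, $V \in \mathcal{V}_T$ in~\eqref{cal V-1} is $-\mathrm{D} V w(i{-}1) \le \delta \one - \mathrm{D} x(i)$ for $i \in \nat_T$, where $w(i{-}1)$ is the $i$th column of $W_{0,T}$. Then, using the vec trick~\eqref{vec trick} with fixed $x$, I rewrite $\mathrm{S} V \bmat{I\\K}x = \big(\big(\bmat{I\\K}x\big)^\top \otimes \mathrm{S}\big)\ve(V)$ and, stacking over $i=1,\dots,T$, the $\mathcal{V}_T$-constraints become $-(W_{0,T}^\top \otimes \mathrm{D})\ve(V) \le \bmat{\delta \one - \mathrm{D} x(1)\\ \vdots\\ \delta \one - \mathrm{D} x(T)}$.

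Grouping $(d,\ve(V))$ as the decision variable, the statement ``for all $d \in \mathcal{D}_\delta$ and $V \in \mathcal{V}_T$, $V\bmat{I\\K}x + d \in \mathcal{S}$'' (at the fixed $x$) takes exactly the form~\eqref{farkas-1} with
\[
\sa{A} = \bmat{\mathrm{D} & 0 \\ 0 & -(W_{0,T}^\top \otimes \mathrm{D})},\ \sa{c} = \bmat{\delta \one\\ \delta \one - \mathrm{D} x(1)\\ \vdots \\ \delta \one - \mathrm{D} x(T)},
\]
$\sa{B} = \bmat{\mathrm{S} & \big(\bmat{I\\K}x\big)^\top \otimes \mathrm{S}}$, and $\sa{d} = \one$. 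Before invoking Fact~\ref{fact:farkas}, I must check its nontriviality assumption $\sa{A}\sa{z} \le \sa{c}$: picking $d=0$ gives $\mathrm{D}\cdot 0 = 0 \le \delta \one$ (since $\delta \ge 0$), and picking $V = \bmat{A & B}$, which lies in $\mathcal{V}_T$ by~\eqref{data-relation} and the assumption that each column of $D_{0,T}$ is in $\mathcal{D}_\delta$, verifies the second block of inequalities. Applying Fact~\ref{fact:farkas} at the fixed $x$ yields precisely \eqref{sol3:rid of the dist} with a multiplier $E \in \real^{n_s \times (n_d + T n_d)}$.

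Finally, since the argument above is carried out at a generic $x \in \mathcal{S}$, quantifying universally over $x \in \mathcal{S}$ gives the desired equivalence: \eqref{rob-inv} holds if and only if for every $x \in \mathcal{S}$ there exists $E$ (depending on $x$) satisfying \eqref{sol3:rid of the dist}. I do not expect a real obstacle here beyond bookkeeping: the main care is to correctly identify $\sa{A}$, $\sa{B}$, $\sa{c}$, $\sa{d}$ in the Farkas template and to certify the premise via the nominal $(A,B) \in \mathcal{V}_T$, both of which are straightforward given the vec-trick rewriting.
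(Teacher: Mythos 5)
Your proposal is correct and follows essentially the same route as the paper's proof: fix $x \in \mathcal{S}$, rewrite the constraint on $(d,\ve(V))$ via the vec trick~\eqref{vec trick}, verify the premise of Fact~\ref{fact:farkas} using $d=0$ and $\bar V = \bmat{A & B} \in \mathcal{V}_T$ (guaranteed by~\eqref{data-relation}), and apply the Farkas equivalence pointwise in $x$. The identification of $\sa{A}$, $\sa{B}$, $\sa{c}$, $\sa{d}$ matches the paper's, so there is nothing to add.
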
%
\begin{proof}
\eqref{rob-inv} holds if and only if for all $x \in \mathcal{S}$,
\begin{equation}
\label{rob-inv-for-all-dist}
\mathrm{S} V \bmat{I\\ K} x + \mathrm{S} d \le \one \text{ for all } d \in \mathcal{D}_\delta \text{ and } V \in \mathcal{V}_T.
\end{equation}
In order to apply Fact~\ref{fact:farkas}, we can suitably rewrite this expression in terms of $\ve(V)$, instead of $V$. First, we have by~\eqref{vec trick} that
\begin{equation}
\label{stack V-1}
\mathrm{S} V \bmat{I\\ K} x  = \bigg( \bigg(\bmat{I \\ K} x \bigg)^\top \otimes \mathrm{S} \bigg) \ve(V).
\end{equation}
Then, recall that $V \in \mathcal{V}_T$ amounts from~\eqref{cal V-1} to having $\mathrm{D}(X_{1,T} - V W_{0,T})_i \le \delta \one$ for all $i \in \nat_T$. By stacking  these constraints and applying \eqref{vec trick}, $V \in \mathcal{V}_T$ is equivalent to
\begin{equation}
\label{stack V-2}
- (W_{0,T}^\top \otimes \mathrm{D}) \ve(V) \le
\bmat{\delta \one - \mathrm{D} x(1)\\ \vdots \\ \delta \one - \mathrm{D} x(T)}.
\end{equation}
Hence, by~\eqref{stack V-1}, \eqref{stack V-2} and recalling that $d \in \mathcal{D}_\delta$ is equivalent to $\mathrm{D} d \le \delta \one$, we conclude that \eqref{rob-inv-for-all-dist} is equivalent to
\begin{equation}
\label{before applying farkas}
\begin{split}
& \bmat{\mathrm{S} & \bigg(\bmat{I \\ K} x \bigg)^\top \otimes \mathrm{S} } \bmat{d\\ \ve(V)} \le \one\\ 
& \hspace*{2.5mm} \text{ for all } \bmat{d\\ \ve(V)} \text{ such that }  \\
& \hspace*{4.5mm}
\bmat{
\mathrm{D} & 0 \\
0& - (W_{0,T}^\top \otimes \mathrm{D}) } \bmat{d\\ \ve(V)} \le 
\bmat{\delta \one\\ \delta \one - \mathrm{D} x(1)\\ \vdots \\ \delta \one - \mathrm{D} x(T)}.
\end{split}
\end{equation}
Let us show that the assumption of Fact~\ref{fact:farkas} is verified, so that we can apply it to~\eqref{before applying farkas}. This assumption is verified if there exists $(d,\ve(V))$ satisfying the last inequality in~\eqref{before applying farkas}, which is equivalent to the existence of $d \in \mathcal{D}_\delta$ and $V \in \mathcal{V}_T$.
Since data are generated according to~\eqref{sys} and satisfy \eqref{data-relation}, $\bar V = \bmat{A & B}$ satisfies 
\begin{equation}
\label{verify assumption farkas}
(X_{1,T} - \bar V W_{0,T})_i = (D_{0,T})_i \in \mathcal{D}_\delta \text{ for all } i \in \nat_T,
\end{equation}
hence $\bar V \in \mathcal{V}_T$. With $d=0 \in \mathcal{D}_\delta$, this concludes the existence of $d \in \mathcal{D}_\delta$ and $V \in \mathcal{V}_T$, so the assumption of Fact~\ref{fact:farkas} is verified. Then, we apply Fact~\ref{fact:farkas} with clear correspondences between the quantities in~\eqref{before applying farkas} and those in~\eqref{farkas-1}, and obtain that \eqref{rob-inv} holds if and only if for all $x \in \mathcal{S}$, there exists $E \in \real^{n_s \times (n_d +T n_d)}$ such that \eqref{sol3:rid of the dist} holds.
\end{proof}

\begin{remark}[Redundant constraints]
The characterization of $\mathcal{V}_T$ given by~\eqref{stack V-2} may present redundant constraints. Removing those constraints and obtaining a \emph{minimal} halfspace representation of the polyhedron (see \cite[p.~48]{ziegler2012lectures}) can be beneficial to having a decision variable $E$ with fewer columns than $n_d + n_d T$.
\end{remark}

We emphasize that the quantity $E$ in Lemma~\ref{lemma:rid of the dist-3} can depend on $x$, even though this dependence is not explicit. 
This observation is essential for the next step (see in particular the proof of Lemma~\ref{lemma:vertices}). By using the expression~\eqref{set S - vertices} in terms of the $V_\mathcal{S}$ vertices $x^1, \dots, x^{V_\mathcal{S}}$ and checking the condition in Lemma~\ref{lemma:rid of the dist-3} only at these vertices, we can restate Lemma~\ref{lemma:rid of the dist-3} as in the next lemma without introducing conservatism.

\begin{lemma}
\label{lemma:vertices}
Let $\mathcal{S}$ in~\eqref{set S} be bounded. \eqref{rob-inv} holds if and only if there exists 
$E^j \in \real^{n_s \times (n_d + T n_d)}$ with $j \in \nat_{V_\mathcal{S}}$ such that \eqref{main:without x and dist BIS} holds.
\end{lemma}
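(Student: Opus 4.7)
The plan is to derive Lemma~\ref{lemma:vertices} as a consequence of Lemma~\ref{lemma:rid of the dist-3} together with the vertex representation~\eqref{set S - vertices} of the bounded set $\mathcal{S}$. Recall from the remark after Lemma~\ref{lemma:rid of the dist-3} that the matrix $E$ provided by that lemma may depend on $x$; the task is to show that, thanks to boundedness of $\mathcal{S}$, it suffices to check the condition only at the finitely many vertices $x^1,\dots,x^{V_\mathcal{S}}$.

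For the $(\Rightarrow)$ direction, I would simply invoke Lemma~\ref{lemma:rid of the dist-3} at $x=x^j$ for each $j \in \nat_{V_\mathcal{S}}$: since each vertex belongs to $\mathcal{S}$, the lemma yields matrices $E^j \in \real^{n_s \times (n_d+Tn_d)}$ satisfying \eqref{sol3:rid of the dist} with $x$ replaced by $x^j$, which is exactly \eqref{main:without x and dist BIS}.

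For the $(\Leftarrow)$ direction, assume matrices $E^1,\dots,E^{V_\mathcal{S}}$ are given satisfying \eqref{main:without x and dist BIS}. For an arbitrary $x \in \mathcal{S}$, use~\eqref{set S - vertices} to write $x = \sum_{j=1}^{V_\mathcal{S}} \alpha_j x^j$ with $\alpha_j \ge 0$ and $\one^\top \alpha = 1$, and define the candidate
\begin{equation*}
E := \sum_{j=1}^{V_\mathcal{S}} \alpha_j E^j.
\end{equation*}
The three conditions in~\eqref{sol3:rid of the dist} follow one by one from \eqref{main:without x and dist BIS}: nonnegativity $E \ge 0$ is immediate from $\alpha_j \ge 0$ and $E^j \ge 0$; for the equality, take the convex combination of the equalities in~\eqref{main:without x and dist BIS} and use bilinearity of the Kronecker product to get
\begin{equation*}
\sum_j \alpha_j \Bigl(\bmat{I\\ K} x^j\Bigr)^{\!\top}\!\!\otimes \mathrm{S} \;=\; \Bigl(\bmat{I\\ K} x\Bigr)^{\!\top}\!\!\otimes \mathrm{S},
\end{equation*}
together with $\sum_j \alpha_j \mathrm{S} = \mathrm{S}$; finally, the inequality $E \bmat{\delta \one\\ \delta \one - \mathrm{D} x(1)\\ \vdots \\ \delta \one - \mathrm{D} x(T)} \le \sum_j \alpha_j \one = \one$ follows from convexly combining the inequalities in~\eqref{main:without x and dist BIS}. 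By Lemma~\ref{lemma:rid of the dist-3} this gives~\eqref{rob-inv}.

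The main (and really only) obstacle is noticing that the $x$-dependence inside the equality constraint of~\eqref{sol3:rid of the dist} enters \emph{linearly} through the Kronecker factor $(\bmat{I\\K} x)^\top \otimes \mathrm{S}$, which is exactly what allows the convex combination $E = \sum_j \alpha_j E^j$ to transport the vertex conditions to every point of $\mathcal{S}$ without conservatism; together with the normalization $\one^\top \alpha = 1$ taking care of the $\mathrm{S}$ block and the right-hand side, this closes the proof.
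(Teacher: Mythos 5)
Your proposal is correct and matches the paper's own proof essentially step for step: the forward direction by evaluating Lemma~\ref{lemma:rid of the dist-3} at the vertices, and the converse by writing $x=\sum_j \alpha_j x^j$ and taking the convex combination $E=\sum_j \alpha_j E^j$, with the Kronecker-product linearity and $\one^\top\alpha=1$ handling the equality and inequality constraints exactly as in Section~\ref{sec:proof rob inv data-based}. No gaps to report.
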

\begin{proof}
The $\!\!\implies\!\!$-direction is trivial by using Lemma~\ref{lemma:rid of the dist-3} for $x=x^j \in \mathcal{S}$ with $j \in \nat_{V_\mathcal{S}}$.
The $\!\!\impliedby\!\!$-direction is proven if for $E^j$ with $j \in \nat_{V_\mathcal{S}}$ satisfying~\eqref{main:without x and dist BIS} and an arbitrary $x \in \mathcal{S}$, we find $E$ that can depend on $x$ and satisfies~\eqref{sol3:rid of the dist}. 
This is done by exploiting that for a bounded $\mathcal{S}$ as in~\eqref{set S - vertices}, an arbitrary $x \in \mathcal{S}$ can always be written in terms of a convex combination of the vertices of $\mathcal{S}$ as
$
x=\sum_{j=1}^{V_\mathcal{S}} \alpha_j x^j 
$
through coefficients $\alpha_j$ ($j \in \nat_{V_\mathcal{S}}$) that satisfy $\one^\top \alpha =1$ and $\alpha \ge 0$. By using these same coefficients, take
\begin{equation}
\label{sol3:selection E}
E = \sum_{j=1}^{V_\mathcal{S}} \alpha_j E^j,
\end{equation}
and let us show that such $E$ satisfies \eqref{sol3:rid of the dist} to complete the proof. 
Since $\alpha \ge 0$ and $E^j \ge 0$ for each $j \in \nat_{V_\mathcal{S}}$ by~\eqref{main:without x and dist BIS}, we have from~\eqref{sol3:selection E}
\begin{subequations}
\begin{equation}
\label{sol3:E>=0}
E \ge 0.
\end{equation}
By $\one^\top \alpha=1$ and the properties of the Kronecker product, we have
\begin{align}
& \bmat{\mathrm{S} & \bigg(\bmat{I \\ K} x \bigg)^\top \otimes \mathrm{S} }
= \bmat{\mathrm{S} & \bigg(\bmat{I \\ K} \sum_{j=1}^{V_\mathcal{S}} \alpha_j x^j \bigg)^\top \otimes \mathrm{S} } \notag\\
& = \bmat{\sum_{j=1}^{V_\mathcal{S}} \alpha_j \mathrm{S} &  \sum_{j=1}^{V_\mathcal{S}} \alpha_j \bigg( \big( \bmat{I \\ K}x^j\big)^\top  \otimes \mathrm{S} \bigg)} \\
& \overset{\eqref{main:without x and dist BIS}}{=} \sum_{j=1}^{V_\mathcal{S}} \alpha_j E^j \bmat{
\mathrm{D} & 0 \notag \\
0& - (W_{0,T}^\top \otimes \mathrm{D}) } \overset{\eqref{sol3:selection E}}{=} E \bmat{
\mathrm{D} & 0 \\
0& - (W_{0,T}^\top \otimes \mathrm{D}) }. \notag
\end{align}
By $\one^\top \alpha=1$ and $\alpha \ge 0$, we have
\begin{equation}
\label{sol3:E vec le}
\begin{split}
& E \bmat{\delta \one\\ \delta \one - \mathrm{D} x(1)\\ \vdots \\ \delta \one - \mathrm{D} x(T)} 
\overset{\eqref{sol3:selection E}}{=} \sum_{j=1}^{V_\mathcal{S}} \alpha_j E^j \bmat{\delta \one\\ \delta \one - \mathrm{D} x(1)\\ \vdots \\ \delta \one - \mathrm{D} x(T)} \\
& 
\overset{\eqref{main:without x and dist BIS}}{\le}
\sum_{j=1}^{V_\mathcal{S}} \alpha_j \one = \one.
\end{split}
\end{equation}
\eqref{sol3:E>=0}--\eqref{sol3:E vec le} show that \eqref{sol3:rid of the dist} is satisfied and the proof is complete.
\end{subequations}
\end{proof}

Lemma~\ref{lemma:vertices} shows indeed that Theorem~\ref{thm:main} holds.

\subsection{Proof of Proposition~\ref{prop:bounded cal V_T}}
\label{sec:proof prop bounded cal V_T}

With the set $\mathcal{\hat D}_\delta$ in~\eqref{set hatD}, the set $\mathcal{V}_T$ in~\eqref{cal V-1} becomes
\begin{equation}
\label{cal V-2}
\begin{split}
& \mathcal{V}_T= \{ V \in \real^{n \times (n+m)}\colon \\
& \hspace*{2cm}(X_{1,T} - V W_{0,T})_i \in \mathcal{\hat D}_\delta \text{ for all } i \in \nat_T \}.
\end{split}
\end{equation}
For $V\in \mathcal{V}_T$, steps analogous to those leading to~\eqref{stack V-2} yield
\begin{equation*}
\begin{split}
& (X_{1,T} - V W_{0,T})_i \in \mathcal{\hat D}_\delta \text{ for all } i \in \nat_T\\
& \!\iff \!
\smat{
\delta d_\textup{l} - \mathrm{\hat D} x(1)\\
\vdots\\
\delta d_\textup{l} - \mathrm{\hat D} x(T)
}
\le
- (W_{0,T}^\top \otimes \mathrm{\hat D}) \ve(V) 
\le 
\smat{
\delta d_\textup{u} - \mathrm{\hat D} x(1)\\
\vdots\\
\delta d_\textup{u} - \mathrm{\hat D} x(T)
}.
\end{split}
\end{equation*}
Then, we can reparametrize the set $\mathcal{V}_T$ in~\eqref{cal V-2} as
\begin{equation*}
\begin{split}
& \hspace*{-2pt}\mathcal{V}_T= \{ v \in \real^{n(n+m)} \colon \\
& \hspace*{3pt}
\smat{
\delta d_\textup{l} - \mathrm{\hat D} x(1)\\
\vdots\\
\delta d_\textup{l} - \mathrm{\hat D} x(T)
}\le - (W_{0,T}^\top \otimes \mathrm{\hat D}) v \le 
\smat{
\delta d_\textup{u} - \mathrm{\hat D} x(1)\\
\vdots\\
\delta d_\textup{u} - \mathrm{\hat D} x(T)
}
\}.
\end{split}
\end{equation*}
Note that the matrix $W_{0,T}^\top \otimes \mathrm{\hat D} \in \real^{T \hat{n}_d \times (n+m)n}$ (since $W_{0,T}^\top \in \real^{T \times (n+m)}$), and 
\begin{equation}
\label{rank of product}
\rank(W_{0,T}^\top \otimes \mathrm{\hat D}) = \rank(W_{0,T} ) \rank( \mathrm{\hat D})
\end{equation}
by the properties of the Kronecker product \cite[Thm.~4.2.15]{horn1991topics}. With these considerations, we are in a position to prove the statement using Lemma~\ref{lemma:bounded polyhedron} and noting that $\mathcal{V}_T$ is nonempty (as shown in the proof of Lemma~\ref{lemma:rid of the dist-3}, see~\eqref{verify assumption farkas}).
If $W_{0,T}$ has full row rank and $\mathrm{\hat D}$ has full column rank, we conclude that: $\rank(W_{0,T}^\top \otimes \mathrm{\hat D}) = (n+m) n$, $W_{0,T}^\top \otimes \mathrm{\hat D}$ has full column rank, and $\mathcal{V}_T$ is bounded by Lemma~\ref{lemma:bounded polyhedron}. Conversely, if $W_{0,T}$ has \emph{not} full row rank or $\mathrm{\hat D}$ has \emph{not} full column rank,
\begin{equation*}
\begin{split}
& \rank(W_{0,T} ) \rank( \mathrm{\hat D}) < (n+m) \rank( \mathrm{\hat D}) \le (n+m)n \\
& \text{ or } \rank(W_{0,T} ) \rank( \mathrm{\hat D}) < \rank(W_{0,T} ) n \le (n+m)n.
\end{split}
\end{equation*}
We conclude that: $\rank(W_{0,T}^\top \otimes \mathrm{\hat D}) < (n+m)n$ from~\eqref{rank of product}, $W_{0,T}^\top \otimes \mathrm{\hat D}$ has not full column rank, and $\mathcal{V}_T$ is \emph{not} bounded by Lemma~\ref{lemma:bounded polyhedron}.

\subsection{Proof of Theorem~\ref{thm:alternative linear program}}
\label{sec:proof thm alternative linear program}

Since $\mathcal{V}_T$ is bounded, each element $V \in \mathcal{V}_T$ can be written as the convex hull of the (matrix) vertices $V^j$ of $\mathcal{V}_T$ with $j=1, \dots, \VVT$.
The existence of $K \in \real^{m \times n}$ satisfying \eqref{rob-inv} is then equivalent to
\begin{equation*}
\begin{split}
& V^j \bmat{I \\ K } x + d \in \mathcal{S} \text{ for all } x \in \mathcal{S}, d \in \mathcal{D}_\delta, j \in \nat_{\VVT}\\
& \iff \bmat{\mathrm{S}V^j \bmat{I\\ K} & \mathrm{S} }\bmat{x\\d} \le \one \text{ for all } \bmat{x\\ d} \text{ such that }\\
& \hspace*{22mm}\bmat{\mathrm{S} & 0 \\ 0 & \mathrm{D}} 
\bmat{x \\ d} \leq
\bmat{\one \\ \delta \one}
\text{ and all } j \in \nat_{\VVT}.
\end{split}
\end{equation*}
By Fact~\ref{fact:farkas} (whose assumption is verified), this is equivalent to the existence of  $K\in \real^{m \times n}$ and $E^j \in \real^{n_s \times (n_s + n_d)}$ with $j \in \nat_{\VVT}$ satisfying \eqref{alternative linear program}.

\section{Numerical example}
\label{sec:example}

\begin{table}
\centering
\begin{tabular}{llllll}
\toprule
Parameter 	& Value		& Parameter	& Value		& Parameter			& Value\\
\midrule
$\gamma_1$ 	& $0.005$	& $h$		& $0.2$		& $d_0$				& $5$\\
$\gamma_2$ 	& $0.01$	& $\bar d$	& $8$		& $d_1$				& $10$\\
$\tau$		& $0.01$	& $\bar v$	& $8.5$		& $v_\textup{M}$	& $22.2$\\
\bottomrule
\end{tabular}
\caption{Values of parameters of Section~\ref{sec:example}.}
\label{tab:par}
\end{table}

We consider a platoon of two vehicles that should remain close to each other while respecting a safety distance, which is a benchmark problem in (data-driven) safe control (see, e.g., \cite{wabersich2018scalable}).
We refer to Table~\ref{tab:par} for all the values of the parameters used for this example.
Each vehicle $i$ (with $i \in \{1,2\}$) is described as
\begin{equation}
\label{vehicles}
\dot s_i = v_i, \quad \dot v_i = - \gamma_i v_i + a_i
\end{equation}
where $s_i$ is the absolute position of the vehicle along the road, $v_i$ is its velocity, $a_i$ is the force acting on it normalized by its mass, and finally $\gamma_i$ is a viscous friction coefficient (normalized by the mass of the vehicle).
For some desired relative distance $\bar d$ between the vehicles (vehicle $1$ is leading, vehicle $2$ is following) and cruise velocity $\bar v$ of the platoon, we consider the state and input variables
\begin{equation*}
\begin{aligned}
x_1 & = s_1 - s_2 - \bar d,	 &	x_2 & = v_1 - \bar v,			& x_3 & = v_2 - \bar v,\\
	& 						 &	u_1 & = a_1 - \gamma_1 \bar v,	& u_2 & = a_2 - \gamma_2 \bar v
\end{aligned}
\end{equation*}
so that \eqref{vehicles} yields the state equation
\begin{equation*}
\dot x =
\bmat{0 & 1 & -1\\
0 & -\gamma_1 & 0\\
0 & 0 & -\gamma_2}
x
+\bmat{0 & 0\\
1 & 0\\
0 & 1}
u
=:
A_\textup{ct} x + B_\textup{ct} u.
\end{equation*}
To apply our discrete-time results, we consider the Euler discretization with sampling time $\tau$ and the disturbance $d$
\begin{equation}
\label{actual}
x^+ = (I + \tau A_\textup{ct}) x + \tau B_\textup{ct} u + d=: A x + B u + d,
\end{equation}
where the eigenvalues of $A$ are $1$, $1-\tau \gamma_1$, $1-\tau \gamma_2$. These matrices will be used directly in the  model-based design, which we use for comparison, or will constitute the underlying model for the generation of data in the data-based design.

The matrices $\mathrm{S}$ and $\mathrm{D}$ of the sets $\mathcal{S}$ and $\mathcal{D}_\delta$ are
\begingroup
\setlength{\arraycolsep}{2pt}
\begin{equation}
\label{matrices S and D}
\mathrm{S} \!:=\!
\bmat{
    0.9165   &  0.1900   & -0.1762\\
    1.4250   &  0.0661   & -0.0769\\
   -0.0322   &  0.1925   &  0.2165\\
   -0.9165   & -0.1900   &  0.1762\\
   -1.4250   & -0.0661   &  0.0769\\
    0.0322   & -0.1925   & -0.2165\\}\!,
\mathrm{D}\!:=\!
\begin{bmatrix}
     1   &  0   &  0\\
    -1   &  0   &  0\\
     0   &  1   &  0\\
     0   & -1   &  0\\
     0   &  0   &  1\\
     0   &  0   & -1\\
\end{bmatrix}.
\end{equation}
\endgroup
and ensure by~\cite{gravalou1994algorithm} that \eqref{MB-robInv-2} has a feasible model-based solution, to which we can compare our data-based solution. With this $\mathrm{D}$, we explore different values of $\delta$ for the set $\mathcal{D}_\delta$ in the following.
The considered linear programs are solved using YALMIP \cite{lofberg2004yalmip} and MOSEK.

We first show that a feasible model-based solution can be found according to~\eqref{MB-robInv-2}.
When $A$ and $B$ in~\eqref{actual} are known, \eqref{MB-robInv-2} can be solved in the decision variables $E$ and $K$ for the sets $\mathcal{S}$ and $\mathcal{D}_\delta$ as in~\eqref{matrices S and D} and with $\delta$ up to approximately $0.0625$. For that $\delta$, we obtain
\begin{equation*}
K= \bmat{
 -351.5550 &	-102.1629 	& 13.4292\\
  370.5020 &	-11.3050	& -71.7106\\
}.
\end{equation*}
The evolution of the closed-loop system made of \eqref{actual} and $u=K x$ is in Figure~\ref{fig:MB}.
We note that in this figure and the following analogous ones, the disturbance $d$ in~\eqref{actual} is always taken randomly on the vertices of $\mathcal{D}_\delta$, corresponding to a worst case.

\begin{figure}
\centerline{\includegraphics[width=.54\textwidth]{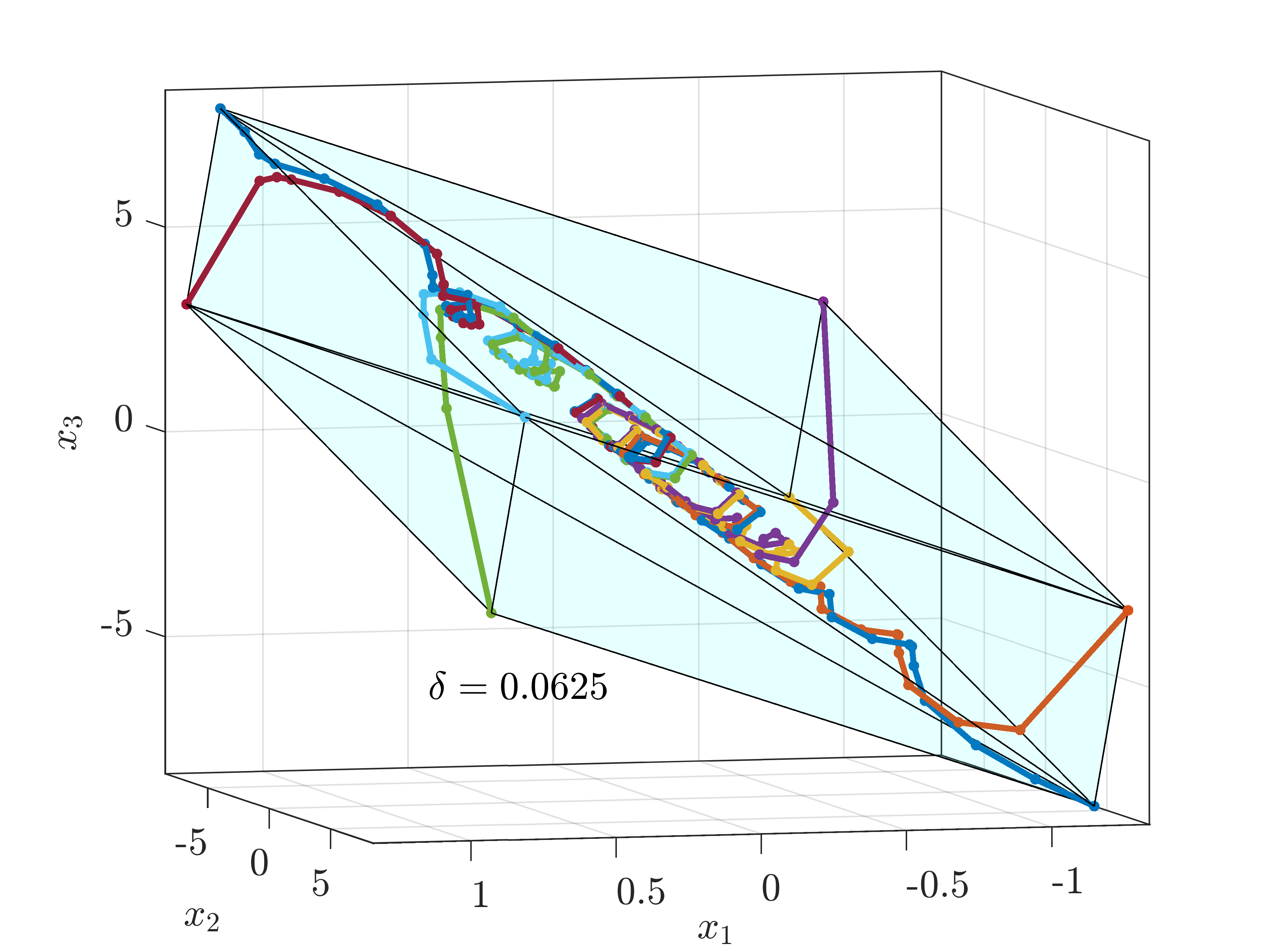}}
\caption{Solutions obtained for the closed loop of \eqref{actual} with the gain tuned
by the model-based approach. The cyan surface denotes the boundary of the polyhedron $\mathcal{S}$.}
\label{fig:MB}
\end{figure}

We show then our data-based solution and use a large number of data, i.e., $T=1600$ samples, as in Figure~\ref{fig:data}.
The input data $u(k)$ ($k=0, \dots, T-1$) are selected by generating each of the $m=2$ components according to a random variable uniformly distributed in $[-5,5]$.
The disturbance samples $d(k)$ ($k=0, \dots, T-1$) are selected by generating each of the $n=3$ components according to a random variable uniformly distributed in $[-\delta,\delta]=[-0.05,0.05]$ (consistently with $\mathrm{D}$ in~\eqref{matrices S and D}).
On one hand, the required number of data can be lower (in particular, a necessary condition to have ``rich'' data as in Section~\ref{sec:benefits large and rich} is $T \ge n+m=5$).
On the other hand, we choose a large value of $T$ to highlight that the solution of the considered linear program is computationally viable even with a large amount of samples and attains $\delta$ very close to the model-based solution.

\begin{figure}
\centerline{\includegraphics[width=.49\textwidth]{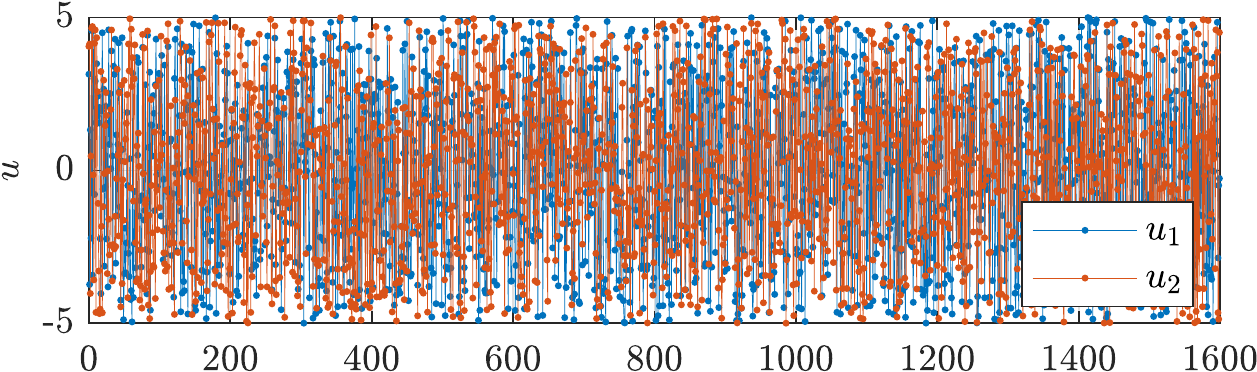}}
\centerline{\includegraphics[width=.49\textwidth]{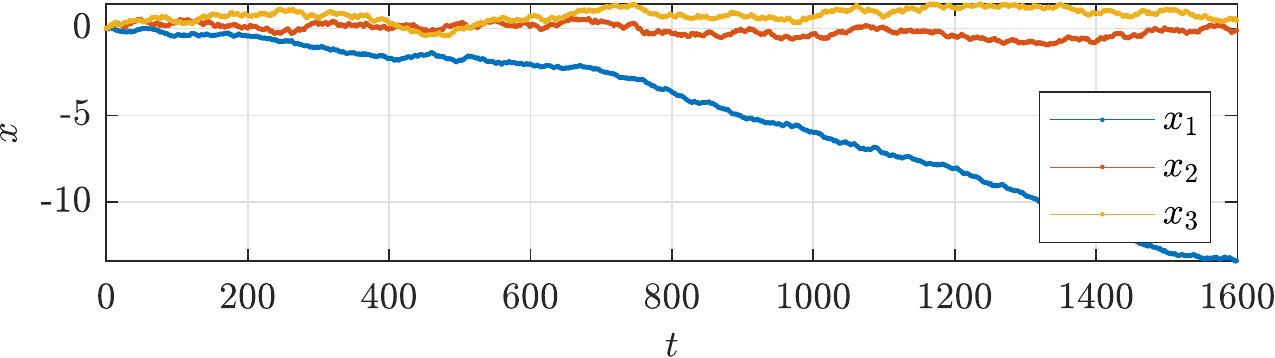}}
\caption{The considered input and state data with for $T=1600$.}
\label{fig:data}
\end{figure}

For these data organized in the matrices $U_{0,T}$, $X_{0,T}$, $X_{1,T}$ in~\eqref{U0}-\eqref{X1}, we solve the linear program in~\eqref{main:without x and dist BIS}, which depends only on noisy data, and obtain
\begin{equation*}
K= \bmat{
 -145.7158  & -30.7420  &  13.4551\\
  164.2787  &  11.8638  & -29.0357\\
}
\end{equation*}
and the closed-loop evolutions in Figure~\ref{fig:DB}.

\begin{figure}
\centerline{\includegraphics[width=.54\textwidth]{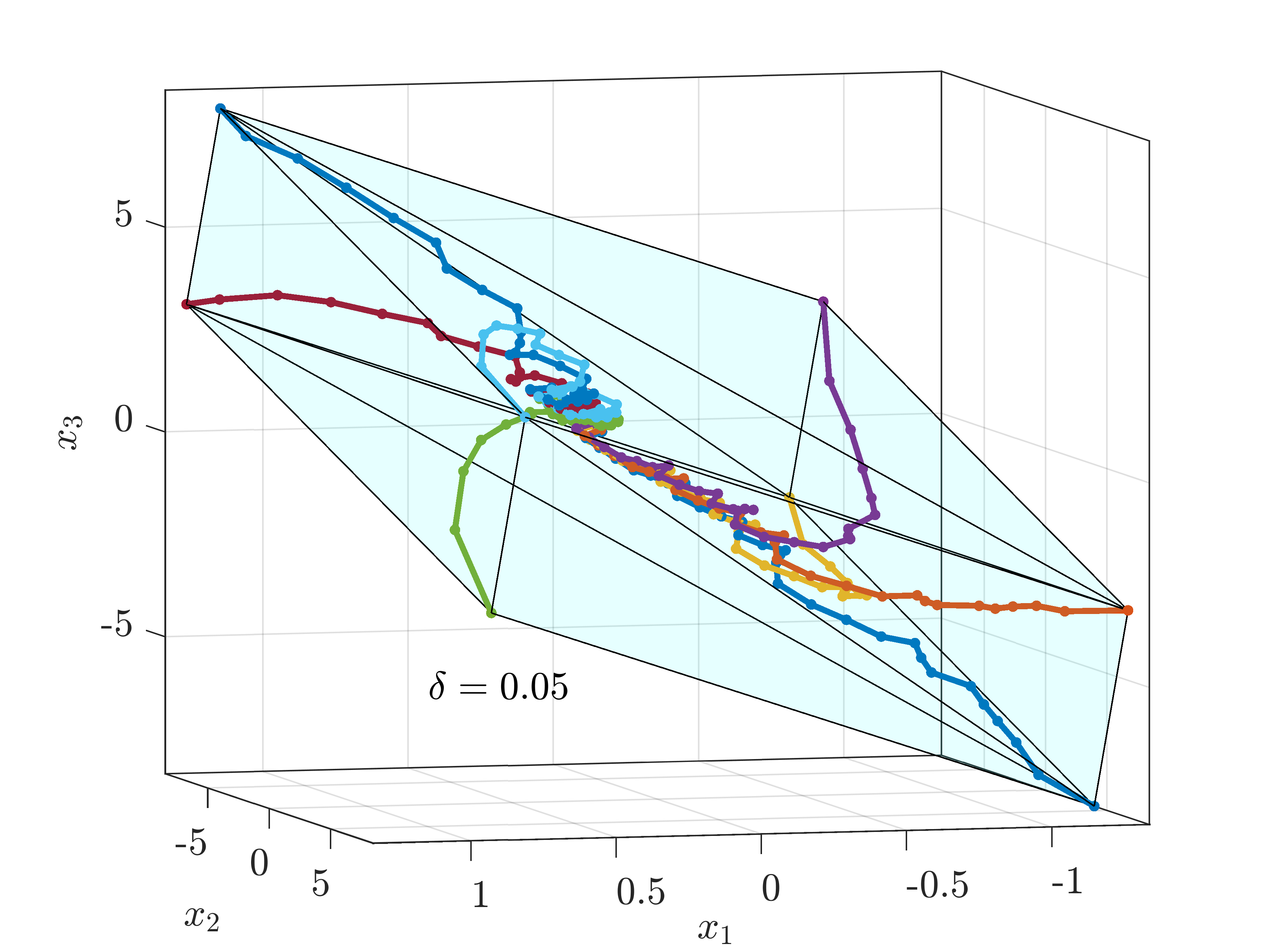}}
\caption{Solutions obtained for the closed loop of \eqref{actual} with the gain tuned
by the data-based approach.}
\label{fig:DB}
\end{figure}

Finally, we characterize in Figure~\ref{fig:deltaT} how different values of $T$ can obtain feasible data-based solutions. Data-based solutions are compared against the maximum $\delta=0.0625$ obtained by the model-based solution, whereby we note that values above $0.0625$ led to infeasible solutions both for the model-based and data-based solution.
The larger $T$, the more information is carried by the data on the system dynamics, and this makes it easier to satisfy the data-based feasibility problem, as observed in Section~\ref{sec:benefits large and rich}. 
The data-based solution compares tightly with the model-based solution for large $T$, such as $\delta=0.0575$ for $T=3000$. 
At the same time, with smaller values of $T$ such as $600$, the data-based solution attains $\delta=0.0275$ compared to the maximum $\delta=0.0625$.

\begin{figure}
\centerline{\includegraphics[width=.49\textwidth]{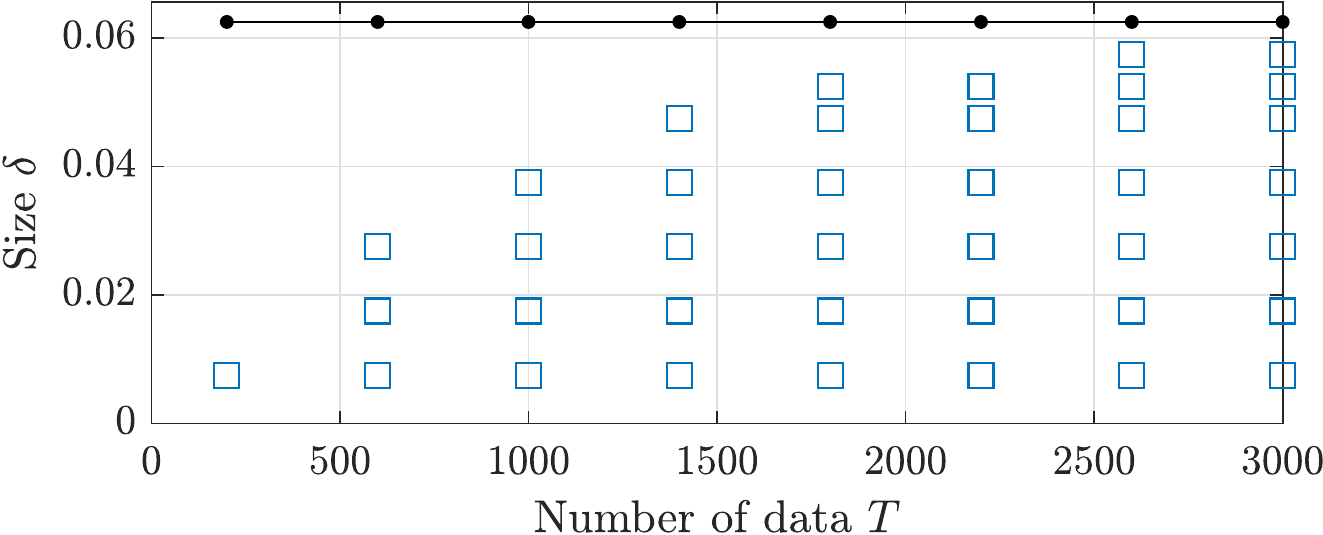}}
\caption{Feasibility analysis for $\delta$ and $T$. The squares correspond to pairs $(T,\delta)$ for which a feasible data-based solution was found by the solver. The black line denotes the maximum $\delta$ of feasible model-based solutions.}
\label{fig:deltaT}
\end{figure}

\section{Conclusion}

For a discrete-time linear system perturbed by a process disturbance, we obtained necessary and sufficient conditions for robust invariance in terms of noisy input-state data, given by a linear program.

The corresponding explicit formulae can be used, with some conservatism, to treat nonlinear systems when their state and input are constrained on compact sets \cite{aswani2013provably}. More specific topics of future work are (i)~a more quantitative characterization of the set $\mathcal{V}_T$ as the number $T$ of data varies, (ii)~the consideration of input-output data and output feedback, (iii) designing the data-collection experiment to explicitly account for the state constraints (according to the paradigm of safe exploration),  
(iv) relaxing the linear control policy $u=K x$ to piecewise-linear or nonlinear ones (as the former can be conservative for a given set $\mathcal{S}$) through techniques from the model-based case \cite{gutman1986admissible,gilbert1991linear} or new ones tailored on data.

\section*{Acknowledgment}

We would like to thank Professor Mario Sznaier for pointing out to us the related work in~\cite{Dai2019cdc,Dai2020aut,dai2020semialgebraic,Dai2020ifac}.

\appendix

This appendix gives the proof of Fact~\ref{fact:farkas}, which relies on the nonhomogeneous Farkas theorem \cite[p.~32]{mangasarian1994nonlinear} reported next.
\begin{fact}\emph{\cite[p.~32]{mangasarian1994nonlinear}}
For each $\sa{i} \in \{1, \dots, \sa{q}\}$, let $\sa{A} \in \real^{\sa{p} \times \sa{n}}$, $\sa{b}_\sa{i} \in \real^{\sa{n}}$, $\sa{c} \in \real^{\sa{p}}$, $\sa{d}_\sa{i} \in \real$.
\eqref{alternative-1} and \eqref{alternative-2} are equivalent:
\begin{align}
\label{alternative-1}
& \!\!\! \text{there does \emph{not} exist } \sa{x} \! \in \! \real^\sa{n}\! \text{ such that }\! (\sa{b}_\sa{i}^\top \sa{x}  > \sa{d}_\sa{i}, \sa{A} \sa{x} \le \sa{c});\!\\
&
\!\!\! \text{there exists } \sa{e}_\sa{i} \in \real^\sa{p} \text{ such that } (\sa{A}^\top \sa{e}_\sa{i} = \sa{b}_\sa{i},\sa{c}^\top \sa{e}_\sa{i} \le \sa{d}_\sa{i}, \sa{e}_\sa{i} \ge 0) \notag \\
& \hspace*{5mm}\text{or } (\sa{A}^\top \sa{e}_\sa{i} = 0,\sa{c}^\top \sa{e}_\sa{i} < 0,\sa{e}_\sa{i} \ge 0).
\label{alternative-2}
\end{align}
\end{fact}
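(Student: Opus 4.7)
The plan is to reduce the matrix-form equivalence of Fact~\ref{fact:farkas} to a row-by-row application of the nonhomogeneous Farkas Fact stated just above, using the Slater-like hypothesis $\sa{A}\sa{z}\le \sa{c}$ exactly where it is needed, namely to rule out the degenerate alternative in \eqref{alternative-2}.

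First I would split the matrix inequality $\sa{B}\sa{x}\le\sa{d}$ into its $\sa{q}$ scalar components: writing the rows of $\sa{B}$ as $\sa{b}_\sa{i}^\top$ and the entries of $\sa{d}$ as $\sa{d}_\sa{i}$, condition \eqref{farkas-1} holds if and only if, for every $\sa{i}\in\{1,\dots,\sa{q}\}$, there is no $\sa{x}\in\real^\sa{n}$ with $\sa{A}\sa{x}\le\sa{c}$ and $\sa{b}_\sa{i}^\top\sa{x} > \sa{d}_\sa{i}$. This is exactly condition \eqref{alternative-1} applied row-by-row, so the nonhomogeneous Farkas Fact gives, for each $\sa{i}$, a vector $\sa{e}_\sa{i}\in\real^\sa{p}$ satisfying one of the two alternatives in \eqref{alternative-2}.

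The main obstacle is discarding the second alternative, i.e., $(\sa{A}^\top \sa{e}_\sa{i} = 0,\ \sa{c}^\top \sa{e}_\sa{i} < 0,\ \sa{e}_\sa{i}\ge 0)$, since \eqref{farkas-2} has no such case. Here I would invoke the hypothesis that some $\sa{z}$ satisfies $\sa{A}\sa{z}\le\sa{c}$: premultiplying by $\sa{e}_\sa{i}^\top\ge 0$ yields $\sa{e}_\sa{i}^\top \sa{A}\sa{z}\le \sa{e}_\sa{i}^\top\sa{c}$, but $\sa{A}^\top\sa{e}_\sa{i}=0$ forces the left-hand side to be $0$, contradicting $\sa{c}^\top\sa{e}_\sa{i}<0$. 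Hence, for every $\sa{i}$, only the first alternative can hold, so there exists $\sa{e}_\sa{i}\ge 0$ with $\sa{A}^\top\sa{e}_\sa{i}=\sa{b}_\sa{i}$ and $\sa{c}^\top\sa{e}_\sa{i}\le \sa{d}_\sa{i}$.

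Finally I would assemble the vectors $\sa{e}_\sa{i}^\top$ as the rows of a matrix $\sa{E}\in\real^{\sa{q}\times \sa{p}}$, which immediately yields $\sa{E}\ge 0$, $\sa{E}\sa{A}=\sa{B}$ and $\sa{E}\sa{c}\le \sa{d}$, i.e., \eqref{farkas-2}. For the converse $\eqref{farkas-2}\Rightarrow\eqref{farkas-1}$, which does not require the Slater-like assumption, I would simply observe that for any $\sa{x}$ with $\sa{A}\sa{x}\le\sa{c}$ one has $\sa{B}\sa{x}=\sa{E}\sa{A}\sa{x}\le \sa{E}\sa{c}\le \sa{d}$, where the first inequality uses $\sa{E}\ge 0$ and the remaining relations come directly from \eqref{farkas-2}. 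The argument thus reduces Fact~\ref{fact:farkas} to one application per row of the cited Mangasarian theorem, plus a short sign argument that isolates the role of the feasibility hypothesis.
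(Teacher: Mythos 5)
Your argument is correct and follows essentially the same route as the paper's appendix: the cited nonhomogeneous Farkas theorem is applied row by row, the degenerate alternative $(\sa{A}^\top\sa{e}_\sa{i}=0,\ \sa{c}^\top\sa{e}_\sa{i}<0,\ \sa{e}_\sa{i}\ge 0)$ is ruled out by premultiplying $\sa{A}\sa{z}\le\sa{c}$ with $\sa{e}_\sa{i}^\top\ge 0$, and the surviving vectors are stacked as the rows of $\sa{E}$. Note that, exactly as in the paper, the Mangasarian statement itself is taken as given from the literature; what your argument actually establishes is Fact~\ref{fact:farkas}, which is precisely the content of the paper's appendix (your explicit converse direction is subsumed there by the chain of equivalences).
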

By logical equivalences and transpositions, \eqref{alternative-1} and \eqref{alternative-2} are respectively equivalent to
\begin{align}
& \sa{b}_\sa{i}^\top \sa{x}  \le \sa{d}_\sa{i} \text{ for all }  \sa{x} \in \real^\sa{n} \text{ such that } \sa{A} \sa{x} \le \sa{c};
\label{alternative-1-prime}
\\
&
\text{there exists } \sa{e}_\sa{i} \in \real^\sa{p} \text{ such that } (\sa{e}_\sa{i}^\top \sa{A} = \sa{b}_\sa{i}^\top, \sa{e}_\sa{i}^\top \sa{c} \le \sa{d}_\sa{i}, \sa{e}_\sa{i}^\top \ge 0) \notag \\
& \hspace*{5mm}\text{or } (\sa{e}_\sa{i}^\top \sa{A} = 0,\sa{e}_\sa{i}^\top \sa{c} < 0,\sa{e}_\sa{i}^\top \ge 0).
\label{alternative-2-prime}
\end{align}
Under the assumption that there exists $\sa{z} \in \real^\sa{n}$ satisfying $\sa{A} \sa{z} \le \sa{c}$, the case
\begin{equation}
\label{ruled-out-case}
\sa{e}_\sa{i}^\top \sa{A} = 0,\sa{e}_\sa{i}^\top \sa{c} < 0,\sa{e}_\sa{i}^\top \ge 0
\end{equation}
in~\eqref{alternative-2-prime} can never occur. By contradiction, suppose there exists $\sa{e}_\sa{i} \in \real^\sa{p}$ satisfying \eqref{ruled-out-case}. Since $\sa{e}_\sa{i}^\top \ge 0$, $\sa{A} \sa{z} \le \sa{c}$ would imply that $\sa{e}_\sa{i}^\top \sa{A} \sa{z} \le \sa{e}_\sa{i}^\top \sa{c}$, i.e., $0 \le \sa{e}_\sa{i}^\top \sa{c}$ (since $\sa{e}_\sa{i}^\top \sa{A} = 0$), thereby contradicting $\sa{e}_\sa{i}^\top \sa{c} < 0$. Hence, under the assumption that there exists $\sa{z} \in \real^\sa{n}$ satisfying $\sa{A} \sa{z} \le \sa{c}$, \eqref{alternative-1-final} and \eqref{alternative-2-final} are equivalent:
\begin{align}
& \sa{b}_\sa{i}^\top \sa{x}  \le \sa{d}_\sa{i} \text{ for all }  \sa{x} \in \real^\sa{n} \text{ such that } \sa{A} \sa{x} \le \sa{c};
\label{alternative-1-final}
\\
&
\text{there exists } \sa{e}_\sa{i} \in \real^\sa{p} \notag \\
& \hspace*{1.5cm} \text{ such that } (\sa{e}_\sa{i}^\top \sa{A} = \sa{b}_\sa{i}^\top, \sa{e}_\sa{i}^\top \sa{c} \le \sa{d}_\sa{i}, \sa{e}_\sa{i}^\top \ge 0).
\label{alternative-2-final}
\end{align}
The equivalence of \eqref{alternative-1-final} and \eqref{alternative-2-final} proves the statement of Fact~\ref{fact:farkas} by considering
\begin{equation*}
\sa{B}=\smat{\sa{b}_1^\top\\ \vdots \\ \sa{b}_\sa{q}^\top}, \sa{d}=\smat{\sa{d}_1 \\ \vdots\\ \sa{d}_\sa{q}}, \sa{E}=\smat{\sa{e}_1^\top\\ \vdots \\ \sa{e}_\sa{q}^\top}.
\end{equation*}

\bibliographystyle{plain}
\bibliography{references}

\end{document}